\definecolor{dgreen}{rgb}{0,0.45,0}
\newtheorem{remark}{Remark}
\newtheorem{proposition}{Proposition}
\newtheorem{corollary}{Corollary}
\def\begcen{\begin{center}}
\def\endcen{\end{center}}
\newcommand{\col}{\mbox{col}}
\newcommand{\rank}{\mbox{rank }}
\def\phil{\phi^\mathtt{L}}
\def\dm{d_\mathcal{M}}
\def\bbr{{\mathbb R}}
\def\cals{{\cal S}}
\def\calm{{\cal M}}
\def\hal{\frac{1}{2}}
\def\L2e{{\cal L}_{2e}}
\def\rea{\mathbb{R}}
\def\diag{\mbox{diag}}
\def\et{\epsilon_t}
\def\linf{{\mathcal L}_\infty}
\def\l2{{\mathcal L}_2}
\def\l2e{{\cal L}_{2e}}
\def\rea{\mathbb{R}}
\def\diag{\mbox{diag}}
\def\begequarr{\begin{eqnarray}}
\def\endequarr{\end{eqnarray}}
\def\begequarrs{\begin{eqnarray*}}
\def\endequarrs{\end{eqnarray*}}
\def\begarr{\begin{array}}
\def\endarr{\end{array}}
\def\begequ{\begin{equation}}
\def\endequ{\end{equation}}
\def\lab{\label}
\def\begdes{\begin{description}}
\def\enddes{\end{description}}
\def\begenu{\begin{enumerate}}
\def\begite{\begin{itemize}}
\def\endite{\end{itemize}}
\def\endenu{\end{enumerate}}
\def\lef[{\left[\begin{array}}
\def\rig]{\end{array}\right]}
\def\begcen{\begin{center}}
\def\endcen{\end{center}}
\def\begrem{\begin{remark}\rm}
\def\endrem{\end{remark}}
\def\IJACSP{{\it Int. J. on Adaptive Control and Signal Processing}}
\def\TAC{{\it IEEE Trans. Automatic Control}}
\def\IJC{{\it Int. J. of Control}}
\def\SCL{{\it Systems \& Control Letters}}
\begin{document}

\title{On State Observers for Nonlinear Systems: A New Design and a Unifying Framework}
\author{Bowen~Yi,
        Romeo~Ortega,
        and~Weidong~Zhang% <-this % stops a space
\thanks{This paper is supported by the National Natural Science Foundation of China (61473183, U1509211), China Scholarship Council, the Government of Russian Fedration (074U01, GOSZADANIE 2014/190 (project 2118)), the Ministry of Education and Science of Russian Federation (14.Z50.31.0031). \emph{Corresponding author: W. Zhang}.}
\thanks{B. Yi and W. Zhang are with Department of Automation, Shanghai Jiao Tong University, Shanghai 200240, China. \tt yibowen@ymail.com, wdzhang@sjtu.edu.cn}
\thanks{R. Ortega is with LSS, CNRS-CentraleSup\'elec, Plateau du Moulon, Gif-sur-Yvette 91192, France. \tt ortega@lss.supelec.fr}% <-this % stops a space
%\thanks{Manuscript received April 19, 2005; revised August 26, 2015.}
}

\markboth{Published in IEEE TAC, 2018}%Journal of \LaTeX\ Class Files,~Vol.~14, No.~8, August~2015
{Shell \MakeLowercase{\textit{et al.}}: Bare Demo of IEEEtran.cls for IEEE Journals}

\maketitle

%
%%%%%%%%%%%%%%%%
\begin{abstract}
In this paper we propose a new state observer design technique for nonlinear systems. It combines the well-known Kazantzis-Kravaris-Luenberger observer and the recently introduced parameter estimation-based observer, which become special cases of it---extending the realm of applicability of both methods. A second contribution of the paper is the proof that these designs can be recast as particular cases of immersion and invariance observers---providing in this way a unified framework for their analysis and design. Simulation results of a physical system that illustrates the superior performance of the proposed observer compared to other existing observers are presented.
\end{abstract}
%%%%%%%%%%%%%%%%%%%%%%%%%%%%%%%%%
%

\begin{IEEEkeywords}
State observers, nonlinear systems, parameter estimation.
\end{IEEEkeywords}

\IEEEpeerreviewmaketitle

%%%%%%%%%%%%%%%%%%%%%%%%%%%%%%%%
\section{Introduction and Problem Formulation}
\label{sec1}
%%%%%%%%%%%%%%%%%%%%%%%%%%%%%%%%
%
In this paper we are interested in the design of state observers for nonlinear control systems whose dynamics is described by\footnote{All mappings in the paper are assumed smooth.}
\begin{equation}
\label{sys}
\begin{aligned}
  \dot{x} & = f(x,u) \\
  y & = h(x),
\end{aligned}
\end{equation}
where $x \in \bbr^n$ is the system state, $u \in \bbr^m$ is the control signal, and $y \in \bbr^p$ is the {\em measurable} output signal. The problem is to design a dynamical system
\begequarr
\nonumber
\dot{\chi} & = & F(y, \chi, u)\\
\hat x & = & H(y,\chi,u)
\label{dynobs}
\endequarr
with $\chi \in \bbr^{n_\chi}$, such that
\begequ
\lab{obscon}
\lim_{t\to\infty} |\hat x(t)-x(t)|=0,
\endequ
where $|\cdot|$ is the Euclidean norm. Following standard practice in observer theory we assume that   $u$ is such that the state trajectories of \eqref{sys} are bounded.

Since the publication of the seminal paper  \cite{Luenberger1966TAC}, which dealt with linear time-invariant (LTI) systems, this problem has been extensively studied in the control literature. We refer the reader to  \cite{astolfi2008book,BESbook} for a review of the literature. In this paper we are particularly interested in three recently developed observer design techniques.
\begite
\item The Kazantzis-Kravaris-Luenberger observer (KKLO) first presented in \cite{Kazantzis1998SCL} as an extension to nonlinear systems of Luenberger's observer and further developed in  \cite{Andrieu2006SIAM}.
\item Parameter estimation-based observer (PEBO) proposed in \cite{ortega2015scl}, which translates the state observation problem into an on-line  \emph{parameter estimation} problem.
\item Immersion and invariance observer (I$\&$IO), first reported in \cite{Karagiannis2008TAC} and thoroughly elaborated in  \cite{astolfi2008book}, which proposes a more general observer framework based on the generation of attractive and invariant manifolds.
\endite

The main contributions of our paper are threefold.
\begite
\item[(C1)] Propose a new observer design technique, called [KKL+PEB]O, that combines---in a seamless way---the KKLO and PEBO designs, yielding a new design applicable to a broader class of systems.
\item[(C2)]  Prove that KKLO, PEBO and [KKL+PEB]O  can be recast as particular cases of generalized I$\&$IO---providing in this way a unified framework for their analysis and design.
\item[(C3)]   Present simulation results of the well-known DC-DC \'Cuk converter that illustrate the superior performance of the proposed observer compared to other existing observer designs.
\endite

The remainder of the paper is organized as follows. Section \ref{sec2} gives some preliminaries on KKLO and PEBO.  In Section \ref{sec3} we present the new [KKL+PEB]O. The unifying framework based on immersion and invariance is given in Section \ref{sec4}. In Section \ref{sec5} we present two academic examples that illustrate the interest of the new [KKL+PEB]O and some simulation results of a physical system that compares the new observer with other observer designs. The paper is wrapped--up with concluding remarks and future research directions in Section \ref{sec6}.

%
%%%%%%%%%%%%%%
\section{Preliminaries}
\lab{sec2}
%%%%%%%%%%%%
%
In this section we briefly present simple versions of the KKLO and the PEBO that are motivating to generate the new [KKL+PEB]O in the next section.
\subsection{Kazantzis-Kravaris-Luenberger Observer}
\label{subsec21}
%%%%%%%%%%%%%%%%%%%%%%%%%%%%%%%
The KKLO design is based on the following proposition, which is a simplified version of the more general result reported in \cite{Andrieu2006SIAM,Kazantzis1998SCL}.
\begin{proposition}
\label{pro1}
\rm
Consider the system \eqref{sys} satisfying the following assumption.
\begite
\item[\bf A1]  There exist $n_\xi$ negative real numbers $\lambda_i,\;i=1,2,\dots, n_\xi$, with  $n_\xi \geq n$, and mappings
$$
\begin{aligned}
   \phi &: \rea^n \to \rea^{n_\xi}\\
    \phil &: \rea^{n_\xi} \times \rea^p \to \rea^{n}\\
    B & : \rea^p \times \rea^m \to \rea^{n_\xi}
\end{aligned}
$$
with $n_\xi \geq n - p$, satisfying the following.
\begite
\item[(i)] The KKLO partial differential equation (PDE)
\begin{equation}
\label{KKLOPDE}
    \nabla \phi^\top(x) f(x,u) = \Lambda \phi(x) + B(h(x),u),
\end{equation}
where $\nabla:=({\partial \over \partial x})^\top$ and $\Lambda:=\diag\{\lambda_i\}$.
\item[(ii)] $\phil$ is a left inverse of $\phi$, that is,
\begequ
\lab{phil}
\phil(\phi(x),h(x))=x.
\endequ
\endite
 \endite
The KKLO
\begequarr
\nonumber
    \dot{ \xi } & = & \Lambda  \xi + B(y, u)\\
    \hat{x} & = & \phil(\xi,y),
\lab{kklo}
\endequarr
ensures \eqref{obscon}.
\end{proposition}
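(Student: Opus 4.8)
The plan is to exploit the defining feature of the KKL construction: the map $\phi$ immerses the nonlinear state into coordinates that obey \emph{exactly} linear, diagonal, Hurwitz dynamics, so that the observation error collapses to a trivially stable linear system. The whole argument is therefore organized around a single auxiliary signal, $\eta(t) := \phi(x(t))$, and the comparison of its dynamics with those of the observer state $\xi$.

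First I would differentiate $\phi$ along the trajectories of \eqref{sys}. By the chain rule $\dot\eta = \nabla\phi^\top(x) f(x,u)$, and invoking the KKLO PDE \eqref{KKLOPDE} together with $y=h(x)$, this equals $\Lambda \phi(x) + B(y,u)$. Hence the (unmeasured) signal $\eta=\phi(x)$ obeys precisely the same linear dynamics, driven by the same measurable input $B(y,u)$, as the observer state in \eqref{kklo}. Next I would introduce the error $e := \xi - \phi(x)$ and subtract the two dynamics; the forcing terms cancel identically, leaving $\dot e = \Lambda e$. Since $\Lambda=\diag\{\lambda_i\}$ with every $\lambda_i<0$, this is an exponentially stable linear system, so $e(t)=\xi(t)-\phi(x(t)) \to 0$ (in fact exponentially) for any initialization.

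It remains to transfer this convergence to the state estimate, which is where property (ii) and the standing boundedness assumption enter. Writing $\hat x = \phil(\xi,y) = \phil(\xi,h(x))$ and comparing with the identity $x = \phil(\phi(x),h(x))$ from \eqref{phil}, the estimation error $\hat x - x$ is exactly the difference of $\phil$ evaluated at $\xi$ versus at $\phi(x)$ with the common second argument $h(x)$. I would then conclude that $|\hat x(t)-x(t)|\to 0$, establishing \eqref{obscon}.

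The main obstacle is precisely this final transfer: pointwise smallness of $e(t)$ does not by itself give smallness of $\phil(\xi,y)-\phil(\phi(x),y)$ unless some uniformity is available. Here the assumption that $u$ renders $x(t)$ bounded is essential. It confines $x(t)$, $h(x(t))$ and $\phi(x(t))$ to a compact set, and since $\xi$ converges to $\phi(x)$ it is eventually bounded as well; on the resulting compact set the smooth map $\phil$ is Lipschitz, so that $|\hat x(t)-x(t)| \le L\,|\xi(t)-\phi(x(t))| \to 0$. I expect the verification of this Lipschitz/compactness step—rather than the elementary linear error analysis—to require the most care.
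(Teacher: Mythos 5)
Your proof is correct and follows essentially the same route as the paper, which simply defines $e:=\xi-\phi(x)$ and observes $\dot e=\Lambda e$. Your additional care in transferring the convergence of $e$ to that of $\hat x-x$ via boundedness of the trajectory and local Lipschitz continuity of $\phi^\mathtt{L}$ on a compact set is a legitimate detail that the paper's one-line proof leaves implicit.
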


\begin{proof}
The proof of this proposition follows immediately defining the error signal
$$
e:=\xi - \phi(x)
$$
and noting that $\dot e = \Lambda e.$
\end{proof}
\subsection{Parameter Estimation-based Observer}
\label{subsec22}
%%%%%%%%%%%%%%%%%%%%%%%%%%%%%%%
The PEBO design proposed in \cite{ortega2015scl}, although related with the KKLO, aims at formulating the state reconstruction problem as a parameter estimation problem. Towards this end, we are looking for an injection $B(h(x),u)$ and a (left invertible) mapping $\phi(x)$ that transforms the system \eqref{sys}  into\footnote{To avoid cluttering the notation, whenever clear from context, we use the same symbols to denote mappings playing similar roles in the various observers. The subindex $(\cdot)_P$ or $(\cdot)_L$ is later used to identify the PEBO or KKLO-related mappings in the [KKL+PEB]O.}
\begin{equation}
\label{dotphi}
  \dot \phi(x) = B(h(x),u).
\end{equation}
In this way, selecting (part of) the observer dynamics as
\begequ
\label{dotxi}
  \dot{\xi} = B(y,u),
\endequ
we establish, via simple integration, the key relationship
\begequ
\lab{keyrel}
 \phi(x(t)) = \xi(t) + \theta,
\endequ
where  $\theta$ is a constant vector defined as $\theta:=\phi(x(0))-\xi(0)$. It is clear that, if $\theta$ is known, we have that
$$
x=\phil(\xi+\theta,y).
$$
Hence, the remaining task is to generate an {\em estimate} for $\theta$, denoted $\hat \theta$, to obtain the observed state
\begequ
\lab{hatxpebo}
\hat x=\phil(\xi+\hat \theta,y).
\endequ
To achieve this end, we rely on the existence of the regression model
\begequ
\lab{regmod}
\begin{bmatrix}
y \\ \dot y
\end{bmatrix}
 =
 \begin{bmatrix}
h(\phil(\xi+\theta,y)) \\

\nabla h^\top(\phil(\xi+\theta,y))f(\phil(\xi+\theta,y),u)
\end{bmatrix},
\endequ
where we underscore that $y,u$ and $\xi$ are, of course, measurable.

The main result in \cite{ortega2015scl} may be summarized as follows.

\begin{proposition}
\label{pro2}\rm
  Consider the system \eqref{sys} satisfying Assumption {\bf A1} of Proposition \ref{pro1} {\em with} $\Lambda =0$ and the dynamic extension \eqref{dotxi} verifying the following assumption.
\begite
\setlength{\itemsep}{5pt}
\setlength{\parskip}{2pt}
\item[\bf A2] There exist mappings
$$
\begin{aligned}
& M: \rea^{n_\zeta} \times \rea^{n_\xi} \times \rea^p \times \rea^m \to \rea^{n_\zeta}\\
& N: \rea^{n_\zeta} \times \rea^{n_\xi} \times \rea^p \times \rea^m \to \rea^{n_\xi}
\end{aligned}
$$
such that the dynamical system
    \begin{equation}
    \label{parest0}
    \begin{aligned}
    \dot{\zeta} & = M(\zeta, \xi, y, u) \\
    \hat{\theta} & = N(\zeta, \xi, y, u),
    \end{aligned}
    \end{equation}
defines a stable, {\em consistent} parameter estimator for the regression model \eqref{regmod}, that is $\zeta$ is bounded and
\begequ
\lab{parcon}
\lim_{t\to\infty} | \hat{\theta}(t) - \theta | = 0.
\endequ
\endite
The PEBO \eqref{dotxi}, \eqref{hatxpebo}, \eqref{parest0} verifies  \eqref{obscon}.
\end{proposition}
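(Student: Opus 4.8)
The plan is to show that, under Assumption {\bf A1} with $\Lambda=0$, the proposed $\xi$-dynamics reproduces exactly the relation \eqref{keyrel}, and then to transfer the parameter consistency \eqref{parcon} guaranteed by {\bf A2} into state convergence \eqref{obscon} through the left inverse $\phil$.

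First I would instantiate the KKLO PDE \eqref{KKLOPDE} with $\Lambda=0$. Since $\nabla \phi^\top(x) f(x,u)$ is precisely $\dot\phi(x)$ along trajectories of \eqref{sys} by the chain rule, the PDE collapses to \eqref{dotphi}, i.e. $\dot\phi(x)=B(h(x),u)$. Comparing with the observer dynamics \eqref{dotxi}, namely $\dot\xi=B(y,u)=B(h(x),u)$ since $y=h(x)$, I conclude that $\frac{d}{dt}[\phi(x)-\xi]=0$, so $\phi(x(t))-\xi(t)$ is a constant vector, which is exactly $\theta=\phi(x(0))-\xi(0)$. This establishes \eqref{keyrel}, and invoking the left inverse property \eqref{phil} with $\phi(x)=\xi+\theta$ and $y=h(x)$ yields the exact representation of the true state $x=\phil(\xi+\theta,y)$.

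Next I would subtract this from the observed state \eqref{hatxpebo} to obtain $\hat x - x = \phil(\xi+\htheta,y)-\phil(\xi+\theta,y)$, so that the state error is driven entirely by the parameter error $\htheta-\theta$, which tends to zero by {\bf A2}. To convert this convergence of the argument into convergence of $\phil$, I would first verify that all arguments remain in a compact set: the state trajectory $x(t)$ is bounded by the standing assumption on $u$, hence $\phi(x(t))$ is bounded and therefore $\xi(t)=\phi(x(t))-\theta$ is bounded; moreover $\htheta(t)$ is bounded since $\zeta$ is bounded and $\htheta\to\theta$ by {\bf A2}. Thus both $(\xi+\theta,y)$ and $(\xi+\htheta,y)$ stay in a compact set $K$ for all $t$.

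Finally, since $\phil$ is smooth it is Lipschitz on $K$ with some constant $L$, giving the estimate $|\hat x(t)-x(t)|\le L\,|\htheta(t)-\theta|$. Letting $t\to\infty$ and invoking \eqref{parcon} delivers \eqref{obscon}. The only genuinely delicate point is the compactness argument in the last step: without confining the arguments to a compact set one cannot pass from $\htheta\to\theta$ to $\phil(\xi+\htheta,y)\to\phil(\xi+\theta,y)$, so establishing boundedness of $\xi$—via boundedness of $x$ and smoothness of $\phi$—is the crux that makes the Lipschitz estimate, and hence the whole proof, go through.
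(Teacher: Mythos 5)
Your proposal is correct and follows essentially the same route as the paper, which establishes Proposition \ref{pro2} through the derivation preceding it: the PDE \eqref{KKLOPDE} with $\Lambda=0$ reduces to \eqref{dotphi}, integration of \eqref{dotxi} yields the key relationship \eqref{keyrel}, and consistency of the estimator in {\bf A2} transfers to state convergence through $\phil$. Your explicit boundedness-plus-Lipschitz argument for passing from $\hat\theta\to\theta$ to $\phil(\xi+\hat\theta,y)\to\phil(\xi+\theta,y)$ is a welcome bit of rigor that the paper leaves implicit, but it is not a different approach.
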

\subsection{Remarks}
\label{subsec23}
%%%%%%%%%%%%%%%%%%%%%%%%%%%%%%%
%
\noindent{\bf R1}\quad Notice that the KKLO \eqref{kklo}, together with the dynamics of the system \eqref{sys},  admits an {\em attractive and invariant} manifold
$$
\cals:=\{ (x,\xi) \in \rea^n \times \rea^{n_\xi}\;|\; \xi = \phi(x)\},
$$
and the state is (asymptotically) reconstructed, via $\phil$, with the knowledge of $\xi$. On the other hand, the PEBO generates an {\em invariant} foliation
$$
\cals_\theta:=\{ (x,\xi) \in \rea^n \times \rea^{n_\xi}\;|\; \xi = \phi(x)+\theta,\;\theta \in \rea^{n_\xi}\}.
$$
To reconstruct the state---again via $\phil$---it is necessary to identify the leaf of the foliation where the system evolves. These observations are essential to establish the connection of these observers with the I\&IO, which also relies on the generation of an attractive and invariant manifold, defined by an invertible mapping.\\

\noindent{\bf R2}\quad Besides the additional difficulty of needing to estimate $\theta$, the main drawback of PEBO is that it relies on the open-loop integration \eqref{dotxi}, which might be a problematic operation in practice. In spite of that, PEBO  has proven instrumental in the solution of numerous physical systems problems \cite{bobtsov2015automatica,bobt2017MagLev,CHOetal,PYRetal}---some of them being unsolvable with other observer design techniques---with the open integration problem solved via the addition of ``safety nets" similar to the ones used in PID designs or adaptive control.\\

\noindent{\bf R3}\quad We underscore the fact that the PDE that needs to be solved for PEBO is {\em exactly} the one of KKLO with $\Lambda=0$, that is
$$
    \nabla \phi^\top(x) f(x,u) =  B(h(x),u).
$$
We refer the reader to \cite{pebo2017} where the generation of virtual outputs via signal injection technique of \cite{COMetal} is proposed to simplify the solution of this PDE.

%
%%%%%%%%%%%%%%%%%%%%%%%%%%%%%%%%
\section{New [KKL+PEB] Observers}
\label{sec3}
%%%%%%%%%%%%%%%%%%%%%%%%%%%%%%%
%
In this section we present our first main contribution, namely, a new observer design technique that combines PEBO and KKLO. The key idea of the new observer is to split the states to be estimated in two groups, the first one estimated with a KKLO and the second one with a PEBO.
\subsection{Main result}
\label{subsec31}
%%%%%%%%%%%%%%%%%%%%%%%%%%%%%%%
The following proposition, whose proof follows {\em verbatim} from Propositions \ref{pro1} and  \ref{pro2} formalises the discussion above. For ease of presentation, and without loss of generality, we assume that the aforementioned groups are arranged one on top of the other.
\begin{proposition}
\label{pro3}
\rm
Consider the system \eqref{sys} satisfying Assumption {\bf A1} of Proposition \ref{pro1} {\em with}
\begequ
\lab{lam0}
\begin{aligned}
& \Lambda & = & \lef[{cc} \Lambda_L & {\bf 0}_{q \times (n_\xi-q)} \\  {\bf 0}_{(n_\xi -q) \times q} &  {\bf 0}_{(n_\xi -q) \times (n_\xi-q)}\rig]\; \\ & \Lambda_L: &= & \diag\{\lambda_1,\dots,\lambda_{q}\},
\end{aligned}
\endequ
where $0 \leq q \leq n_\xi,$ ${\bf 0}_{k \times j}$ is a $k \times j$ matrix of zeros, $\lambda_i < 0,\;i=1,\dots,q$. Partition the mapping $B(y,u)$ as follows
$$
B(y,u)=\lef[{c} B_L(y,u) \\ B_P(y,u) \rig],\;B_L \in \rea^{q},\;B_P\in \rea^{n_\xi -q}.
$$
The [KKL+PEB]O
\begequ
\begin{aligned}\label{pebo-eq}
        \dot{\xi }_L & = \Lambda_L{\xi }_L + B_L(y,u) \\
       \dot{\xi}_P & = B_P(y,u) \\
        \dot{\zeta} & = M_P(\zeta, \xi_P,  y, u)\\
           \hat \theta &=N_P(\zeta, \xi_P,y,u) \\
           \hat{x} & = \phil \left( \begin{bmatrix} \xi_L \\ \xi_P \end{bmatrix} +  \begin{bmatrix} \mathbf{0}_q \\  \hat \theta \end{bmatrix},y \right),
\end{aligned}
\endequ
where $\mathbf{0}_q$ is a $q$-dimensional vector of zeros, ensures \eqref{obscon} provided the mappings
$$
\begin{aligned}
& M_P: \rea^{n_\zeta} \times \rea^{n_\xi-q} \times \rea^p \times \rea^m \to \rea^{n_\zeta},\;\\
& N_P: \rea^{n_\zeta} \times \rea^{n_\xi-q} \times \rea^p \times \rea^m \to \rea^{n_\xi-q},
\end{aligned}
$$
define a consistent estimator, that is, \eqref{parcon} holds.
\end{proposition}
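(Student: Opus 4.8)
The plan is to exploit the block structure of $\Lambda$ in \eqref{lam0} to split the reconstruction into two independent pieces---a KKLO piece on the top $q$ coordinates and a PEBO piece on the bottom $n_\xi-q$ coordinates---and then recombine them through the single left inverse $\phil$. First I would partition $\phi=\col(\phi_L,\phi_P)$ conformably with $B=\col(B_L,B_P)$ and substitute into the KKLO PDE \eqref{KKLOPDE}. Because the lower diagonal block of $\Lambda$ is zero, the PDE decouples into $\nabla\phi_L^\top(x) f(x,u)=\Lambda_L\phi_L(x)+B_L(h(x),u)$ and $\nabla\phi_P^\top(x) f(x,u)=B_P(h(x),u)$; the first is precisely the KKLO identity for $\phi_L$ and the second is the PEBO relation \eqref{dotphi} for $\phi_P$. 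This decoupling is what makes the ``verbatim'' claim work.

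Next I would run the two convergence arguments in parallel. For the KKLO coordinates, mimicking the proof of Proposition \ref{pro1}, I define $e_L:=\xi_L-\phi_L(x)$ and differentiate along \eqref{pebo-eq}; the first decoupled PDE cancels both the injection and the drift, leaving $\dot e_L=\Lambda_L e_L$, so $e_L(t)\to0$ exponentially since every $\lambda_i<0$, i.e.\ $\xi_L(t)\to\phi_L(x(t))$. For the PEBO coordinates, mimicking Proposition \ref{pro2}, the second decoupled PDE gives $\dot\phi_P(x)=B_P(y,u)=\dot\xi_P$, and integrating the difference yields the invariant relation $\phi_P(x(t))=\xi_P(t)+\theta$ for the constant $\theta:=\phi_P(x(0))-\xi_P(0)$; the consistency hypothesis \eqref{parcon} then supplies $\hat\theta(t)\to\theta$, hence $\xi_P(t)+\hat\theta(t)\to\phi_P(x(t))$.

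The final step is to combine the two limits through $\phil$. Stacking them, the argument fed to $\phil$ in the last line of \eqref{pebo-eq} converges to $\col(\phi_L(x),\phi_P(x))=\phi(x)$, while its second argument is the measured $y=h(x)$; the left-inverse property \eqref{phil} then gives $\hat x\to\phil(\phi(x),h(x))=x$, which is \eqref{obscon}. I expect the only genuinely delicate point to be this last passage to the limit: $\phil$ is nonlinear, so I would invoke the standing boundedness assumption on $x(t)$ to confine the trajectory to a compact set on which the smooth map $\phil$ is uniformly continuous, thereby turning convergence of the arguments into convergence of $\hat x$. Everything else reduces to the two special cases already established, which is the content of the ``follows verbatim'' remark.
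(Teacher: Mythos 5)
Your proposal is correct and follows exactly the route the paper intends: the paper gives no separate proof of Proposition \ref{pro3}, stating only that it ``follows verbatim'' from Propositions \ref{pro1} and \ref{pro2}, and your decoupling of the PDE into the $\Lambda_L$-block (giving $\dot e_L=\Lambda_L e_L$) and the zero block (giving $\phi_P(x)=\xi_P+\theta$), recombined through $\phil$, is precisely that argument spelled out. Your remark about using boundedness of the trajectory and continuity of $\phil$ to pass to the limit is a legitimate technical point the paper leaves implicit.
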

\subsection{Remarks}
\label{subsec32}
%%%%%%%%%%%%%%%%%%%%%%%%%%%%%%%
%
\noindent{\bf R4}\quad It is clear that Proposition \ref{pro3} contains, as particular cases, Propositions \ref{pro1} and \ref{pro2}. Indeed, the former is recovered setting $q=n_\xi$ while the latter follows with $q=0$. \\

\noindent {\bf R5}\quad The result of Proposition \ref{pro3} can be extended in several directions. For instance, it is possible to replace the PDE \eqref{KKLOPDE} by
$$
\nabla \phi^\top(x) f(x,u) = A \phi(x) + B(h(x),u),
$$
where $A$ is such that there exists a unitary matrix $P$ ensuring $A=P^\top \Lambda P$ with $\Lambda$ given in \eqref{lam0}.\footnote{With a slight modification it is also possible to consider the case of $A$ with purely imaginary eigenvalues.} Clearly, the degree of freedom provided by the inclusion of the matrix $A$ enlarges the set of solutions of the PDE. In this case, the dynamics of $(\xi_L,\xi_P)$ in the observer \eqref{pebo-eq} is  replaced by
\begequarrs
   \lef[{c} \dot \xi_L\\ \dot \xi_P \rig]  & = & \Lambda \lef[{c} \xi_L\\ \xi_P \rig]  + P B(y, u)\\
    \hat{x}& = & \phil(P^\top \lef[{c} \xi_L\\ \xi_P \rig] + \lef[{c} \mathbf{0}_q \\ \hat \theta \rig],y).
\endequarrs
\\

\noindent {\bf R6}\quad In the case of input-affine systems, {\em i.e.}, $f(x,u)=F(x)+g(x)u$, it is possible to decompose the PDE  \eqref{KKLOPDE} into two, that is,
\begequarrs
\nabla \phi^\top(x) F(x) & = & A \phi(x) + B_F(h(x))\\
\nabla \phi^\top(x) g(x) & = & B_g(h(x))
\endequarrs
and define the observer dynamics via
$$
B(y,u)  :=  B_F(y) + B_g(y)u.
$$
Explicit formulas for the solutions of these equations may be found in \cite{pebo2017}.
%
%%%%%%%%%%%%%%%%%%%%%%%%%%%%%%%%
\section{I\&I Observers: An Unifying Framework}
\label{sec4}
%%%%%%%%%%%%%%%%%%%%%%%%%%%%%%%%
%
In this section we show that a mild extension of the I\&IO studied in \cite{Karagiannis2008TAC,astolfi2008book} allows us to capture, as a particular case the new [KKL+PEB]O proposed in this paper---and, consequently, it also contains the KKLO and the PEBO.

\subsection{Extension of I\&I observers}
\label{subsec41}
%%%%%%%%%%%%%%%%%%%%%%%%%%%%%%%
%
The main result of the I\&IO in \cite{Karagiannis2008TAC} is extended in the following proposition by relaxing a dimension requirement imposed to some mappings in the original formulation of I\&IO---see {\bf R8} in Subsection \ref{subsec43}.

\begin{proposition}
\label{pro4}
\rm
Consider the system \eqref{sys}. Assume the existence of mappings
$$
\begin{aligned}
 \beta &: \rea^p \times \rea^{n_\chi} \to \rea^{n_z}\\
 \phi &: \rea^n \to \rea^{n_z}\\
  \phil & : \rea^{n_z} \times \rea^p \to \rea^{n}
 \end{aligned}
$$
  with $\phil(\phi(x),y) =  x$ and $n_\chi \ge n_z$, such that the following assumptions hold.
\begite
    \setlength{\itemsep}{5pt}
    \setlength{\parskip}{2pt}
    \item[\bf A4] $\rank  \nabla_\chi \beta^\top (y,\chi) = n_z$.

    \item[\bf A5] The system with state
    \begequ
    \lab{dynamic_dm}
    \begin{aligned}
     \dot{d}_{\mathcal{M}}  = &
              \nabla_y\beta^\top  \left( \nabla h^\top (x)  f(x,u)  - \nabla h^\top (\hat{x})  f(\hat{x},u) \right) \\
           &       -  \nabla \phi^\top (x)  f(x,u) +   \nabla \phi^\top(\hat{x})   f(\hat{x},u)
    \end{aligned}
    \endequ
   where
   \begequ
   \lab{hatxii}
   \begin{aligned}
     & \hat{x} & := &\phil(\phi(x)+\dm, h(x))\\
     & \dm &:= &\beta(y,\chi) - \phi(x),
   \end{aligned}
   \endequ
   has an asymptotically stable equilibrium $\dm=0$.
\endite
  The I\&IO
  \begequarr
    \lab{function_alpha}
    \dot{ \chi } & = &
     -  [\nabla_\chi \beta^{\top}]^\dagger
       ( \nabla_y \beta^\top \nabla h^\top(\hat{x})  -  \nabla \phi^\top (\hat{x})  )  f(\hat{x},u) \;\;\; \\
       & &
       +  (  I -  [\nabla_\chi \beta^{\top}]^\dagger  \nabla_\chi \beta^\top  )
          Q (y,\chi,u) \nonumber \\
    \hat{x}& = & \phil(\beta(y,\chi),y).
    \lab{I&Io}
    \endequarr
with  $[\cdot]^\dagger$ the pseudoinverse and  $Q: \rea^p \times\rea^{n_\chi} \times \rea^m \to \rea^{n_\chi \times n_\chi}$ an arbitrary mapping, verifies \eqref{obscon}.
\end{proposition}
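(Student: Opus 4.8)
The plan is to track the evolution of the ``off-the-manifold'' coordinate
$$
\dm := \beta(y,\chi) - \phi(x),
$$
which measures the distance of the trajectory $(x,\chi)$ from the target manifold $\{\beta(h(x),\chi)=\phi(x)\}$. The design hinges on the elementary observation that on this manifold the observer output is \emph{exact}: if $\dm=0$ then $\beta(y,\chi)=\phi(x)$, so $\hat x = \phil(\beta(y,\chi),y)=\phil(\phi(x),h(x))=x$ by the left-inverse property. Moreover, the $\hat x$ produced by \eqref{function_alpha} coincides with the $\hat x$ appearing in {\bf A5}, since $y=h(x)$ and $\beta(y,\chi)=\phi(x)+\dm$ together give $\hat x = \phil(\phi(x)+\dm,h(x))$. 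Hence it suffices to show that (a) $\dm(t)\to 0$ and (b) this forces $\hat x(t)-x(t)\to 0$.

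First I would differentiate $\dm$ along the closed-loop trajectories. Using $\dot x=f(x,u)$ and $\dot y=\nabla h^\top(x)f(x,u)$,
$$
\dot{\dm} = \nabla_y\beta^\top\,\nabla h^\top(x)f(x,u) + \nabla_\chi\beta^\top\,\dot\chi - \nabla\phi^\top(x)f(x,u).
$$
The crux is to simplify $\nabla_\chi\beta^\top\dot\chi$ after substituting the observer dynamics \eqref{function_alpha}. Here I would invoke {\bf A4}: since $\nabla_\chi\beta^\top$ has full row rank $n_z$, its pseudoinverse satisfies $\nabla_\chi\beta^\top[\nabla_\chi\beta^\top]^\dagger = I_{n_z}$, and therefore $\nabla_\chi\beta^\top\big(I-[\nabla_\chi\beta^\top]^\dagger\nabla_\chi\beta^\top\big)=0$. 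The second identity annihilates the free term $Q(y,\chi,u)$---which explains why $Q$ may be chosen arbitrarily---while the first collapses the pseudoinverse in the remaining term.

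Carrying out the substitution yields
$$
\nabla_\chi\beta^\top\dot\chi = -\big(\nabla_y\beta^\top\nabla h^\top(\hat x)-\nabla\phi^\top(\hat x)\big)f(\hat x,u),
$$
so that, after regrouping,
$$
\dot{\dm} = \nabla_y\beta^\top\big(\nabla h^\top(x)f(x,u)-\nabla h^\top(\hat x)f(\hat x,u)\big) - \nabla\phi^\top(x)f(x,u) + \nabla\phi^\top(\hat x)f(\hat x,u),
$$
which is precisely the $\dm$-dynamics \eqref{dynamic_dm} postulated in {\bf A5}. Consequently {\bf A5} directly guarantees $\dm(t)\to 0$, establishing (a).

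For (b), since by the standing assumption $x(t)$ (and, along the convergent trajectories, $\chi(t)$) remains bounded, the smooth map $\phil$ is Lipschitz on the relevant compact set, whence
$$
|\hat x - x| = \big|\phil(\phi(x)+\dm,h(x))-\phil(\phi(x),h(x))\big| \le L\,|\dm| \to 0,
$$
which is \eqref{obscon}. The only genuinely delicate points are the pseudoinverse manipulations---valid solely because of the rank hypothesis {\bf A4}, which is exactly the dimension relaxation highlighted in the surrounding discussion---and the verification that the observer's $\hat x$ matches the one in {\bf A5}, so that \eqref{dynamic_dm} truly governs the closed loop. The stability analysis itself is not an obstacle here, having been absorbed into the hypothesis {\bf A5}.
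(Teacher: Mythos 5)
Your proposal is correct and follows essentially the same route as the paper: compute $\dot{d}_{\mathcal{M}}$, substitute the observer dynamics \eqref{function_alpha} to recover \eqref{dynamic_dm}, invoke {\bf A5} to get $\dm(t)\to 0$, and conclude via the left-inverse property of $\phil$. You merely spell out two steps the paper leaves implicit---the pseudoinverse identities $\nabla_\chi\beta^\top[\nabla_\chi\beta^\top]^\dagger=I_{n_z}$ and the annihilation of $Q$ under {\bf A4}, and the continuity of $\phil$ in the final limit---which is a welcome but not substantively different elaboration.
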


\begin{proof}
The dynamics of off-the-manifold coordinate $\dm$ is
$$
\dot{d}_{\mathcal{M}} = \nabla_y \beta^\top \nabla h^\top(x) f(x,u) +   \nabla_\chi \beta^\top \dot{\chi}    -    \nabla \phi^\top(x) f(x,u).
$$
Replacing the dynamics of $\dot{\chi}$ in \eqref{function_alpha}, we get \eqref{dynamic_dm}. According to Assumption {\bf A5}, we have
$$
\lim_{t\to \infty} \dm(t) =0.
$$
Replacing this limit in \eqref{hatxii}  and recalling that  $\phil(\phi(x),y) =  x$ ensures $\lim_{t\to\infty} |\hat{x}(t) - x(t)|=0.$
\end{proof}

\subsection{KKL+PEB observers: An I\&I interpretation}
\label{subsec42}
%%%%%%%%%%%%%%%%%%%%%%%%%%%%%%%
%
In this section we will show that if the system admits a [KKL+PEB]O it also admits an I\&IO. To unify the notation we define
$$
\chi  :=  \begin{bmatrix}\xi_L \\ \xi_P \\ \zeta \end{bmatrix}, \;
\xi  :=  \begin{bmatrix}\xi_L \\ \xi_P \end{bmatrix},\;
\upsilon  :=  \begin{bmatrix}\xi_P  \\ \zeta \end{bmatrix},
$$
introduce the partitions
$$
\begin{aligned}
\phi(x) & := \begin{bmatrix}\phi_L(x)  \\ \phi_P(x) \end{bmatrix} \\
M& := \begin{bmatrix} \mathbf{0}_q \\  M_P(\zeta, \xi_P,y,u) ) \end{bmatrix} \\
N & :=\begin{bmatrix}\mathbf{0}_q,N(\zeta, \xi_P,y,u)\end{bmatrix}
\end{aligned}
$$
and define the mapping
\begequ
\label{upsilon}
\begin{aligned}
  \Upsilon(x,\chi,u):= & \nabla_y\beta^\top  \left( \nabla h^\top (x)  f(x,u)  - \nabla h^\top (\hat{x})  f(\hat{x},u) \right)\\
                  &  -  \nabla \phi^\top (x)  f(x,u) +   \nabla \phi^\top(\hat{x})   f(\hat{x},u),
\end{aligned}
\endequ
that, according to \eqref{dynamic_dm},  defines the dynamics of the off-the-manifold coordinate ${d}_\mathcal{M}$.

\begin{proposition}
\label{prop5}
\rm
Assume the system \eqref{sys} admits a [KKL+PEB]O \eqref{pebo-eq} with
$$
\rank \big(\big[I_{n_{{\xi}}}+ \nabla_{{\xi}} N^\top ~\big|~\nabla_{\zeta}N^\top\big]\big) = n_{\xi},
$$
and $\{(x,\xi,\zeta)|\theta - \hat{\theta} =0 \}$ is invariant. Then, the system admits an I\&IO \eqref{function_alpha}-\eqref{I&Io} with the mappings selected as \eqref{mapping3-1}, \eqref{mapping3-2} and \eqref{mapping3-3}.
%\begin{figure*}[]
%\hrulefill
%\normalsize
\begin{align}
&\beta  = \xi + N(\zeta,\xi_P,h(x)) \label{mapping3-1}  \\
&\Upsilon  = \nonumber\\
& ~~ \begin{bmatrix}
                               - \Lambda_L \phi_L(x) + \Lambda_L \xi_L \\
                                 \mathbf{0}_{n_\xi - q} \\
                                \nabla_{\zeta}\beta^\top  \big( M_P(\zeta, \phi_P(\hat{x}),h(x),u)  - M_P(\zeta, \phi_P(x),h(x),u)  \big)
                              \end{bmatrix}
                                 \label{mapping3-2}   \\
  &     Q  = \mathbf{0}. \label{mapping3-3}
\end{align}
%\hrulefill
%\vspace*{4pt}
%\end{figure*}
If the measurable output signals {\em are} partial states, the [KKL+PEB]O \eqref{pebo-eq} \emph{exactly} coincides with the I\&IO \eqref{function_alpha}-\eqref{I&Io}, and
$$
\mbox{I\&IO PDE \eqref{upsilon}, \eqref{mapping3-2}}\\
\quad \Longrightarrow \quad
\mbox{[KKL+PEB]O PDE \eqref{KKLOPDE}}.
$$
\end{proposition}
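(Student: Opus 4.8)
The plan is to specialize the general I\&IO of Proposition~\ref{pro4} to the mappings \eqref{mapping3-1}--\eqref{mapping3-3}, show that under the partial-state output hypothesis every term of \eqref{function_alpha} collapses to the corresponding right-hand side of \eqref{pebo-eq}, and then read the identity that equates \eqref{upsilon} with \eqref{mapping3-2} as a reconstruction of \eqref{KKLOPDE}. First I would record the consequences of ``$y$ is a partial state'': writing $h(x)=Cx$ with $C$ a constant selection matrix gives $\nabla h^\top(x)=\nabla h^\top(\hat x)=C$, and choosing $\phil(\cdot,y)$ as the left inverse that reinserts the measured coordinates makes it \emph{simultaneously} a right inverse on the (diffeomorphic) immersion of the unmeasured coordinates. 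Consequently $h(\hat x)=y=h(x)$ and $\phi(\hat x)=\beta(y,\chi)$, so that $\phi_L(\hat x)=\xi_L$ and $\phi_P(\hat x)=\xi_P+\hat\theta$. These identities are what let me evaluate the PDE \eqref{KKLOPDE} \emph{at} $\hat x$ and replace $\phi_L(\hat x),\phi_P(\hat x)$ by measurable signals.

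Next I would differentiate \eqref{mapping3-1} in the block ordering $(\xi_L,\xi_P,\zeta)$ to obtain
\[
\nabla_\chi\beta^\top=\begin{bmatrix} I_q & 0 & 0\\[1mm] 0 & I_{n_\xi-q}+\nabla_{\xi_P}N_P^\top & \nabla_\zeta N_P^\top\end{bmatrix},
\]
whereupon the rank hypothesis of the proposition guarantees full row rank $n_z=n_\xi$, so $[\nabla_\chi\beta^\top]^\dagger$ is a right inverse and the $Q=\mathbf 0$ term in \eqref{function_alpha} drops out. Substituting into the injection term $-[\nabla_\chi\beta^\top]^\dagger(\nabla_y\beta^\top C-\nabla\phi^\top(\hat x))f(\hat x,u)$ and using $\nabla_y\beta_L^\top=0$, the $L$-row $[\,I_q\ 0\ 0\,]$ forces the $L$-component of $\dot\chi$ to equal $\nabla\phi_L^\top(\hat x)f(\hat x,u)$, which by \eqref{KKLOPDE} at $\hat x$ equals $\Lambda_L\phi_L(\hat x)+B_L(y,u)=\Lambda_L\xi_L+B_L(y,u)$, exactly $\dot\xi_L$ in \eqref{pebo-eq}; the $P$-component, via the $\Lambda_P=0$ block $\nabla\phi_P^\top(\hat x)f(\hat x,u)=B_P(y,u)$ and the chain-rule expansion of $\tfrac{d}{dt}N_P$, yields $\dot\xi_P=B_P(y,u)$ and $\dot\zeta=M_P$. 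This establishes the exact coincidence of \eqref{pebo-eq} with the I\&IO of Proposition~\ref{pro4}.

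For the implication I would use the off-manifold dynamics from the proof of Proposition~\ref{pro4}, $\dot{\dm}=\dot\beta-\nabla\phi^\top(x)f(x,u)$, insert the now-established \eqref{pebo-eq} dynamics for $\dot\beta$, and equate the outcome to the prescribed $\Upsilon$ of \eqref{mapping3-2}. The $L$-block reads $\Lambda_L\xi_L+B_L-\nabla\phi_L^\top(x)f(x,u)=\Lambda_L\xi_L-\Lambda_L\phi_L(x)$, in which the $\xi_L$-terms cancel to leave $\nabla\phi_L^\top(x)f(x,u)=\Lambda_L\phi_L(x)+B_L(h(x),u)$ for all $x,u$. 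The $P$-block, after the estimator contributions in $\dot N_P$ are absorbed by the $\zeta$-coupling term of \eqref{mapping3-2} and evaluated on the invariant set $\{\theta-\hat\theta=0\}$ (where $\hat x=x$), collapses to $\nabla\phi_P^\top(x)f(x,u)=B_P(h(x),u)$. Stacking the two blocks is precisely \eqref{KKLOPDE} with $\Lambda$ as in \eqref{lam0}.

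I expect the delicate points to be the kernel ambiguity of $[\nabla_\chi\beta^\top]^\dagger$ in $\dot\chi$ and the justification that the right-inverse identity $\phi(\hat x)=\beta$ persists \emph{off} the manifold. Both are settled by the partial-state hypothesis---which turns the unmeasured-coordinate immersion into a genuine diffeomorphism, so that $\phil(\cdot,y)$ is a two-sided inverse---and by the standing assumptions of the proposition: the full-rank condition removes the $Q$-direction, while the invariance of $\{\theta-\hat\theta=0\}$ pins the $\zeta$-block so that the minimum-norm vector returned by the pseudoinverse is exactly $(\dot\xi_L,\dot\xi_P,\dot\zeta)$ of \eqref{pebo-eq}. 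The estimator bookkeeping in the $P$-block---reconciling $\tfrac{d}{dt}N_P$ with the single $M_P$-difference term of \eqref{mapping3-2}---is the one place where a careful, rather than routine, calculation will be required.
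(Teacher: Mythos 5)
Your overall route is the paper's: specialize the I\&I mappings to \eqref{mapping3-1}--\eqref{mapping3-3}, use $h(\hat x)=h(x)$ under the partial-state hypothesis (the paper's fact {\bf F1}), evaluate the PDE \eqref{KKLOPDE} at $\hat x$ to recover $\dot\xi_L=\Lambda_L\xi_L+B_L$, and read the $L$-block of the off-manifold identity to obtain the $\Lambda_L$ half of \eqref{KKLOPDE}; all of this matches the paper's part \textbf{1)} and \eqref{relation2}, and your block computation of $\nabla_\chi\beta^\top$ and of its pseudoinverse acting through the row $[\,I_q\;\;0\;\;0\,]$ is correct.

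The gap is in the $P$-block, in both directions, and that block is the heart of the proof. The paper's mechanism is the identity \eqref{Eq-B2}: because $\{\theta-\hat\theta=0\}$ is invariant and $\theta$ is constant, $\hat\theta=N_P(\zeta,\phi_P(x),y)$ has zero total derivative along trajectories on that set, and expanding this total derivative yields an \emph{algebraic} identity in $(\zeta,x,u)$ that is then applied at the two \emph{independent} arguments $x$ and $\hat x$; this is what produces \eqref{relation4} (so that the pseudoinverse equation is solved by $\col(B_P,M_P)$ and the $\dot\xi_P,\dot\zeta$ blocks coincide with \eqref{pebo-eq}) and the equivalence \eqref{relation5} (so that equality of \eqref{upsilon} with \eqref{mapping3-2} forces $\nabla\phi_P^\top(x)f(x,u)=B_P(h(x),u)$). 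Your substitute---``evaluate on the invariant set $\{\theta-\hat\theta=0\}$, where $\hat x=x$''---does not work: first, $\theta=\hat\theta$ only pins $\xi_P+\hat\theta=\phi_P(x)$ and says nothing about $\xi_L$, so it does not give $\hat x=x$; second, and more importantly, both \eqref{upsilon} and \eqref{mapping3-2} are differences of the same expressions evaluated at $x$ and at $\hat x$, so on $\hat x=x$ the identity you propose to exploit degenerates to $0=0$ and no PDE can be extracted from it. You explicitly defer the ``estimator bookkeeping'' as the one place needing a careful calculation, but that calculation is precisely the missing content: without \eqref{Eq-B2} stated and then used off the diagonal $\hat x=x$, neither the coincidence of the $\xi_P,\zeta$ dynamics with \eqref{pebo-eq} nor the implication to the $B_P$ half of \eqref{KKLOPDE} is established. (A smaller point, which the paper also leaves implicit: with $[\,I+\nabla_{\xi_P}N_P^\top~|~\nabla_\zeta N_P^\top\,]$ merely of full row rank, the pseudoinverse returns the minimum-norm solution, and identifying it with $\col(B_P,M_P)$ rests on \eqref{relation4}, not on the invariance of $\{\theta=\hat\theta\}$ as you suggest.)
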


\begin{proof}
For the sake of clarity, we assume $N_P$ is independent of $u$ to avoid further complicating the notation. Before the proof, we present the following two useful facts.
\begin{enumerate}
  \item[\bf F1] If the output signals are partial states, \emph{i.e.}, $x :={\rm col}(\mathbf{x}_1, \mathbf{x}_2)$ and $y = \mathbf{x}_2$ without loss of generality, we have
$$
h(\hat{x})  = h(\phil (\xi + N(\zeta,\xi_P,y),y ))
                     = h( {\rm col}(\hat{\mathbf{x}}_1, \mathbf{x}_2))
                     = h(x).
$$
\item[\bf F2] When $\xi_P = \phi_P(x)$, we have
$$
{d\over dt} N_P( \zeta,\phi_P(x), y)
 = \mathbf{0}_{n_{\xi} -q},
$$
   thus yielding the following identity.
\begequ
\lab{Eq-B2}
\begin{aligned}
   \nabla_{\xi_P} N_P^\top \nabla \phi_P^\top(x) f(x,u) &+ \nabla_{\zeta} N_P^\top M_P(\zeta,\phi_P(x),y,u) \\
&  + \nabla_y N_P^\top  \nabla h^\top(x) f(x,u)   = \mathbf{0} .
\end{aligned}
\endequ
\end{enumerate}

According to \eqref{mapping3-1}, Assumption {\bf A4} is obviously satisfied. The reminding of the proof is divided into two parts: 1) the selected mappings yield the dynamics \eqref{function_alpha}-\eqref{I&Io}, having the same structure as \eqref{pebo-eq} in [KKL+PEB]O; 2) these mappings are solutions of I\&IO PDE.

\textbf{1)} The dynamics of $\dot{\chi}$ in \eqref{function_alpha} has the term $[\nabla_\chi \beta^\top]^{\dagger}   ( \nabla_y\beta^\top \nabla h^\top  - \nabla \phi ^\top  )  f(\hat{x},u) $, in which
%
%\begin{figure*}[!hb]
%\normalsize
%\setcounter{mytempeqncnt}{\value{equation}}
\begequ
\lab{relation1}
\begin{aligned}
&  \nabla_y \beta^\top \nabla h^\top(\hat{x})    f(\hat{x},u)  -  \nabla \phi^\top (\hat{x})  f(\hat{x},u) \\
%=&     \nabla_y N^\top \nabla h(\hat{x})^\top    f(\hat{x},u)  -  \nabla \phi %(\hat{x})^\top  f(\hat{x},u) \\
= &
        \begin{bmatrix}
        \mathbf{0}_q - \nabla \phi_L^\top(\hat{x})\\
        \nabla_y N_P^\top \nabla h^\top(\hat{x}) - \nabla \phi_P^\top(\hat{x})
          \end{bmatrix}
          f(\hat{x}, u)
\end{aligned}
\endequ
%\setcounter{equation}{\value{mytempeqncnt}}
%\hrulefill
%\vspace*{4pt}
%\end{figure*}
%
We analyze the above equation in two parts, \emph{i.e.}, $ - \nabla \phi_L^\top  f$ and $(\nabla_y N_P^\top \nabla h^\top - \nabla \phi_P^\top) f$. For the first part, the existence of a [KKL+PEB] observer yields the PDE
$$
\Lambda_L\phi_L(x) + B_L(y,u) = \nabla \phi_L^\top (x)  f(x,u).
$$
Substituting $x$ by $\hat{x} = \phi^{\mathtt{L}}(\xi_L)$, we have
\begequ
\lab{relation3}
  \nabla \phi_L^\top (\hat{x})  f(\hat{x}, u) = \Lambda_L \xi_L + B_L(y,u).
\endequ
The second partition of \eqref{relation1} verifies the relation below.
\begequ
\lab{relation4}
\begin{aligned}
&  (\nabla_y N_P^\top \nabla h^\top (\hat{x})- \nabla \phi_P^\top(\hat{x})) f(\hat{x},u)\\
 \overset{\eqref{Eq-B2}}{=} & - ( \nabla_{\xi_P} N_P^\top (\zeta,\phi_P(\hat{x}), h(\hat{x})) + I)  \nabla \phi_P^\top (\hat{x})  f(\hat{x},u) \\
                                          &              - \nabla_{\zeta}  N_P^\top (\zeta,\phi_P(\hat{x}), h(\hat{x})) M_P(\zeta,\phi_P(\hat{x}),h(\hat{x}),u) \\
 = & - \nabla_{\upsilon} \beta^\top \begin{bmatrix} B_P( h(\hat{x}),u) \\  M_P(\zeta,\xi_P ,h(\hat{x}),u) \end{bmatrix}.
\end{aligned}
\endequ

Combining \eqref{relation1}-\eqref{relation4}, we get the mapping $\alpha(\cdot)$ in I\&IO is
$$
\dot{\chi} = \alpha(y,\chi,u) = \begin{bmatrix}
                            \Lambda_L {\xi }_L + B_L(y,u)\\
                            B_P( h(\hat{x}),u) \\
                             M_P(\zeta, \xi_P ,h(\hat{x}),u)
                             \end{bmatrix},
$$
showing that \emph{if} the I\&IO PDE has a solution, then the I\&IO \emph{asymptotically} coincides the [KKL+PEB]O. Furthermore, if the measurable output signals are partial states, due to fact {\bf F1}, the I\&IO \emph{exactly} coincides with the [KKL+PEB]O.

\textbf{2)} We check the solution existence of I\&IO PDE. The first partition of I\&IO PDE is verified as follows.
\begequ
\lab{relation2}
\begin{aligned}
& \quad  \Lambda_L\phi_L(x) + B_L(y,u) = \nabla \phi_L^\top (x)  f(x,u). \\
\Rightarrow &\quad - \Lambda_L\phi_L(x) - B_L (y,u)  +  \Lambda_L\xi_L + B_L(y,u)  \\
& \quad= -\nabla \phi_L^\top (x) f(x,u) + \nabla \phi_L^\top (\hat{x})  f(\hat{x},u)   \\
 \Leftrightarrow
& \quad \Lambda_L\xi_L  - \Lambda_L\phi_L(x) =
 \nabla \phi_L^\top (\hat{x})  f(\hat{x},u)  -\nabla \phi_L^\top (x)  f(x,u)
\end{aligned}
\endequ
For the second partition of the I\&IO PDE, the identity \eqref{Eq-B2} yields \eqref{relation5}. Combining \eqref{relation2}-\eqref{relation5}, it shows that the selecting mappings are solutions of I\&IO PDE.

The invariant manifold of [KKL+PEB]O is
$$
\calm = \left\{ (x,\chi) \in \rea^n \times \rea^{n_\chi} \big|  \;
                           \xi   + N(\zeta,\xi_P,y)   =  \phi(x)  \right\}.
$$
The dynamics of off-the-manifold coordinate is $\dot{d}_{\mathcal{M}} =  \Upsilon(x,\chi,u)$, whose convergence is guaranteed by the consistent identification Assumption {\bf A2} and the fact that the matrix $\Lambda_L$ is Hurwitz. This completes the proof.
\end{proof}
%%%%%%
\begin{figure*}[b]
\hrulefill
\normalsize
\begequ
\lab{relation5}
\begin{aligned}
    &
   -\nabla_{\upsilon} \beta^\top(h(x),\chi) \begin{bmatrix} B_P(h(x),u) \\ M_P(\zeta, \phi_P(x),h(x),u) \end{bmatrix}
                        + \nabla_{\upsilon} \beta^\top(h(\hat{x}),\chi) \begin{bmatrix} B_P(h(\hat{x}),u) \\ M_P(\zeta, \phi_P(\hat{x}),h(\hat{x}),u) \end{bmatrix} \\
 =  & -\nabla_{\upsilon} \beta^\top(h(x),\chi) \begin{bmatrix} \nabla \phi_P^\top(x) f(x,u) \\ M_P(\zeta, \phi_P(x),h(x),u) \end{bmatrix}
                        + \nabla_{\upsilon} \beta^\top(h(\hat{x}),\chi) \begin{bmatrix} \nabla \phi_P^\top(\hat{x}) f(\hat{x},u) \\ M_P(\zeta, \phi_P(\hat{x}),h(\hat{x}),u) \end{bmatrix}  \\
  \Longleftrightarrow  &\quad \quad  \nabla \phi^\top_P(x) f(x,u) = B_P(h(x),u).
\end{aligned}
\endequ
\hrulefill
\vspace*{4pt}
\end{figure*}
%%%%%%

We are in position to present the main result of this paper---the unified observer framework captured by I\&IO.
\begin{corollary}\rm
  For the nonlinear system \eqref{sys}, a [KKL+PEB]O implies the existence of I\&I observers. Moreover, the
following ``set" relationship holds:
 \begequ
 \lab{setdia}
 \left.
\begin{aligned}
\mbox{PEBO}\\
\mbox{KKLO}
\end{aligned}
\right\}
\subset
\mbox{[KKL+PEB]O }
\subset
\mbox{I\&IO}.
\endequ
\end{corollary}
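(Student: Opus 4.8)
The plan is to obtain the Corollary by concatenating three membership facts, each already established earlier in the paper, so that essentially no fresh computation is required beyond what was done in the proofs of Propositions \ref{pro3} and \ref{prop5}; I would read \eqref{setdia} as the chain $\mbox{PEBO},\mbox{KKLO}\subset\mbox{[KKL+PEB]O}\subset\mbox{I\&IO}$ and prove the two halves separately. For the two left-most inclusions I would invoke Remark {\bf R4}: given a system admitting a PEBO (Proposition \ref{pro2}), set $q=0$ in Proposition \ref{pro3}, whereupon $\Lambda_L$ and $B_L$ are void, the $\xi_L$-block drops out, and \eqref{pebo-eq} reduces \emph{verbatim} to the PEBO \eqref{dotxi}, \eqref{hatxpebo}, \eqref{parest0}; symmetrically, given a system admitting a KKLO (Proposition \ref{pro1}), set $q=n_\xi$, so that $\Lambda=\Lambda_L$ is Hurwitz, the PEBO block $(\xi_P,\zeta,\hat\theta)$ is empty, and \eqref{pebo-eq} collapses to the KKLO \eqref{kklo}. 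Hence each system in the PEBO class and each system in the KKLO class also belongs to the [KKL+PEB]O class.

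The right-most inclusion $\mbox{[KKL+PEB]O}\subset\mbox{I\&IO}$ is precisely Proposition \ref{prop5}, which constructs, for any system admitting a [KKL+PEB]O \eqref{pebo-eq}, an I\&IO \eqref{function_alpha}-\eqref{I&Io} using the mappings \eqref{mapping3-1}--\eqref{mapping3-3}. This same inclusion delivers the opening sentence of the Corollary, since the assertion that ``a [KKL+PEB]O implies the existence of I\&I observers'' is nothing but this statement read over the whole class. Concatenating the two halves then yields the full chain \eqref{setdia}.

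The only delicate point, and the step I would state most carefully, is that Proposition \ref{prop5} carries the two nondegeneracy hypotheses $\rank\big(\big[I_{n_\xi}+\nabla_\xi N^\top\,\big|\,\nabla_\zeta N^\top\big]\big)=n_\xi$ and invariance of $\{\theta-\hat\theta=0\}$. I would argue that these are mild, generically satisfied regularity conditions---the rank condition is what secures Assumption {\bf A4} and hence the well-posedness of the pseudoinverse appearing in \eqref{function_alpha}, while the invariance condition is inherited from the consistency Assumption {\bf A2}---so that \eqref{setdia} should be understood to hold over the class of [KKL+PEB]O designs meeting these hypotheses. This is the main obstacle only in the sense of requiring the qualification to be stated cleanly; the algebraic content has already been discharged in Proposition \ref{prop5}.
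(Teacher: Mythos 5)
Your proposal is correct and follows essentially the same route as the paper, which offers no separate proof for this corollary: the left inclusions are exactly the specializations $q=0$ (PEBO) and $q=n_\xi$ (KKLO) of Proposition \ref{pro3} noted in Remark {\bf R4} (and elaborated in {\bf R9}), and the right inclusion is precisely Proposition \ref{prop5}. Your explicit caveat that the chain holds modulo the rank and invariance hypotheses of Proposition \ref{prop5} is a sound clarification the paper leaves implicit.
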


\subsection{Remarks}
\label{subsec43}
%%%%%%%%%%%%%%%%%%%%%%%%%%%%%%%
%
\noindent{\bf R7}\quad As discussed above, an I\&IO generates the  invariant manifold
$$
\calm = \left\{ (x,\chi) \in \rea^n \times \rea^{n_\chi} \big|  \; \beta(h(x),\chi)  =  \phi(x)  \right\},
$$
which is made attractive ensuring---via Assumption {\bf A5}---that the zero equilibrium of the dynamics of the off-the-manifold coordinate  \eqref{dynamic_dm}, which may be written as
{
$$
 \dot{d}_{\mathcal{M}}  = \Upsilon(x,\chi,u)  =:  \Upsilon_0(x,\dm,u)
$$}
has an asymptotically stable equilibrium at the origin.\\

\noindent {\bf R8}\quad In the I\&IO proposed in  \cite{Karagiannis2008TAC,astolfi2008book} we fix $ n_\chi  = n_z \le n $. In this case, \eqref{function_alpha} reduces to
$$
  \dot{\chi}  =
     - [\nabla_\chi \beta^{-1}]^\top
       (  \nabla_y\beta^\top \nabla h^\top(\hat{x})      -  \nabla \phi^\top (\hat{x})  ) f(\hat{x},u),
$$
and {\bf A4} is equivalent to requiring that $\nabla_\chi\beta$ is a non-singular square matrix. For PEBO and [KKL+PEB]O, the dimensions of their corresponding invariant manifolds are less than the dimensions of dynamic extensions. Hence, we generalise I\&IO removing the requirement $n_\xi = n_z$ and using the pseudoinverse.\\

\noindent{\bf R9}\quad KKLOs and PEBOs are specific cases of [KKL+PEB]Os, making them covered by I\&IO framework. More specifically the following statements hold.
\begin{itemize}
  \item the KKLO \eqref{kklo} coincides with the I\&IO \eqref{function_alpha}-\eqref{I&Io} with mappings selecting
      $$
      \begin{aligned}
      n_{\chi} & = n_{z} = q \\
      \beta(y,\chi) & = \xi = \chi \\
      \Upsilon(x,\chi,u)  & = -\Lambda\phi(x) + \Lambda\chi
      \end{aligned}
      $$
      with \emph{any} mapping $Q(y,\chi,u)$. The KKLO PDE sacrifices the freedom for $\Upsilon$ by fixing $\Upsilon = - \Lambda\phi(x) + \Lambda\chi$.
  \item If the measurable output signals are partial states, the PEBO coincides with the I\&IO with mappings \footnote{Here we also assume $N$ is independent of $u$ for the sake of clarity.}
      $$
      \begin{aligned}
     & \chi  =\col(\xi,\zeta) \\
     & \beta = \xi + N(\zeta,\xi,y) \\
      &\Upsilon =   \nabla_\chi\beta^\top
      \begin{bmatrix}
        \mathbf{0} \\
        M(\zeta, \phi(\hat{x}),h(\hat{x}),u) - M(\zeta, \phi(x),h(x),u)
      \end{bmatrix}
      \end{aligned}
      $$
      and $Q=0$, where $ \hat{x}  = \phil(\beta(y,\chi),y)$.
\end{itemize}
~\\
\noindent{\bf R10}\quad Notice that the condition that ``the measurable output signals {\em are} partial states" in Proposition \ref{prop5} is done without loss of generality because it is always possible to do a change of coordinates to verify it.
%

%%%%%%%%%%%%%%
\section{Examples}
\label{sec5}
%%%%%%%%%%%%%%%%%%%%%%%%%%%%%%%%%
%
\subsection{Proving the interest of the new observer}
\lab{subsec51}
In this section, an academic example for which neither KKLO nor PEBO are applicable, but it is solvable via our new [KKL+PEB]O design.
%%%%%%%%%%%%%%%%%%%%%%%%%%%%%%%%%
\begin{proposition}
\lab{pro6}
\rm
Consider the system
\begequ
\label{num_exmp}
\begin{aligned}
\dot{x}_1   & = - x_1^3 + e^{x_3}\\
\dot{x}_2  &  = - x_2 + x_1^2 + \sin x_1\\
\dot{x}_3  & =  (x_1^2 +1)^{-1} + x_1 u \\
y  &  = x_1,
\end{aligned}
\endequ
The following facts hold.
\begite
\item[{\bf F3}] The system {\em does not admit} a KKLO nor a PEBO.
\item[{\bf F4}]  The system {\em admits} a [KKL+PEB]O, namely,
\begin{align}
 \dot{\xi}_1 & = - \xi_1 + y^2 + \sin y \label{dyn_ext_5.1-1} \\
 \dot{\xi}_2 & = u y + ( y^2 +1 )^{-1} \label{dyn_ext_5.1-2}\\
  \dot{ \hat{\Theta}} & = \gamma \psi (Y - \psi \hat{\Theta}) \label{exp_estimator5.1-1} \\
  \hat{x}_2 & = \xi_1 \\
\hat{x}_3 & = \xi_2 +  \ln \hat{\Theta}.
\end{align}
where $\gamma >0$ is an adaptation gain and $Y,\psi$ are obtained via LTI filtering as
\begequarr
\nonumber
Y & = & {\alpha p \over p+\alpha}\big[y\big] + {\alpha \over p +\alpha} \big[y^3\big] \\
\psi & = & {\alpha \over p+\alpha}\big[e^{\xi_2}\big],\;\psi(0)>0
\lab{ypsi}
\endequarr
with $p:={d \over dt}$ and $\alpha>0$, is a  [KKL+PEB]O that ensures
$$
\lim_{t\to\infty} |\hat x_i(t) - x_i(t)|=0,\;i=2,3.
$$
\endite
\end{proposition}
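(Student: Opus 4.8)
The plan is to establish the two claims separately, with the constructive fact \textbf{F4} carrying the bulk of the work and \textbf{F3} handled by a structural non-existence argument. For \textbf{F4}, I would first cast the proposed observer in the framework of Proposition \ref{pro3}. Since $y=x_1$ is measured, only $x_2$ and $x_3$ must be reconstructed, so I would take $\phi(x)=\col(x_2,x_3)$ with $\Lambda=\diag\{-1,0\}$, giving $n_\xi=2$, $q=1$: the $x_2$-channel is a KKLO with $\lambda_1=-1$ and the $x_3$-channel is a PEBO with zero eigenvalue. Substituting into the KKLO PDE \eqref{KKLOPDE} yields $B_L(y,u)=y^2+\sin y$ and $B_P(y,u)=uy+(y^2+1)^{-1}$, which are precisely the injections in \eqref{dyn_ext_5.1-1}--\eqref{dyn_ext_5.1-2}; the left inverse $\phil$ is trivial since $(\phi(x),y)$ recovers the full state.

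For the KKLO channel I would set $e_2:=\xi_1-x_2$ and use $\dot x_2=-x_2+y^2+\sin y$ to obtain $\dot e_2=-e_2$, so $\hat x_2=\xi_1\to x_2$ exponentially. For the PEBO channel, integrating $\dot\xi_2=B_P(y,u)=\dot x_3$ gives the key relation $x_3(t)=\xi_2(t)+\theta$ with constant $\theta=x_3(0)-\xi_2(0)$, exactly as in \eqref{keyrel}. The remaining task is to estimate $\theta$. Here I would exploit the output dynamics $\dot y=-y^3+e^{x_3}=e^{\xi_2}e^{\theta}-y^3$ and reparametrize with $\Theta:=e^{\theta}>0$, producing the linear regression $\dot y+y^3=e^{\xi_2}\Theta$. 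Applying the strictly proper filter $\frac{\alpha}{p+\alpha}$ to both sides and using $\frac{\alpha}{p+\alpha}[\dot y]=\frac{\alpha p}{p+\alpha}[y]$ yields $Y=\psi\Theta+\et$, with $Y,\psi$ as in \eqref{ypsi} and $\et$ an exponentially decaying filter transient.

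With this regression in hand, the estimation error $\tilde\Theta:=\hat\Theta-\Theta$ of the gradient estimator \eqref{exp_estimator5.1-1} satisfies $\dot{\tilde\Theta}=-\gamma\psi^2\tilde\Theta+\gamma\psi\et$. Invoking the consistency hypothesis of Proposition \ref{pro2} (Assumption~\textbf{A2})---concretely a persistency-of-excitation condition on $\psi$---I would conclude $\tilde\Theta\to0$, hence $\hat\Theta\to\Theta=e^{\theta}$. Substituting into $\hat x_3=\xi_2+\ln\hat\Theta$ and using $x_3=\xi_2+\theta$ gives $\hat x_3\to x_3$, completing \textbf{F4}. For \textbf{F3} I would argue the structural obstruction: because $\dot x_3$ depends only on $(y,u)$, any left-invertible coordinate isolating $x_3$ in a KKLO PDE forces its eigenvalue to be $0$, contradicting the requirement $\lambda_i<0$; conversely the damping $-x_2$ prevents $\nabla\phi_P^\top f$ from depending on $(y,u)$ alone for any left-invertible $\phi_P$ capturing $x_2$, contradicting the PEBO requirement $\Lambda=0$. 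Thus neither design applies alone, whereas the combined observer succeeds.

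The hard part is twofold. First, the convergence $\hat\Theta\to\Theta$ hinges on $\psi=\frac{\alpha}{p+\alpha}[e^{\xi_2}]$ being persistently exciting along the closed-loop trajectory; without this the gradient flow need not be consistent, so one must either verify PE for the given $u$ or invoke it through Assumption~\textbf{A2}. Second, since $\hat x_3=\xi_2+\ln\hat\Theta$ requires $\hat\Theta(t)>0$ for all $t$, one must ensure the estimate stays strictly positive---this is where $\psi(0)>0$ and the positivity of $\Theta=e^{\theta}$ enter, and a projection of $\hat\Theta$ onto the positive reals may be needed to make $\ln\hat\Theta$ well-defined throughout.
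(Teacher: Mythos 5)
Your treatment of \textbf{F4} follows essentially the same route as the paper: the PDE solution $\phi(x)=\col(x_2,x_3)$ with $\Lambda=\diag\{-1,0\}$, the key relation $x_3=\xi_2+\theta$, the reparametrization $\Theta=e^{\theta}$, and the filtered regression $Y=\psi\Theta+\epsilon_t$ are all identical. One difference worth noting: you defer consistency of the gradient estimator to a persistency-of-excitation hypothesis and flag it as unresolved, whereas the paper closes this directly. Since the regression is scalar, $\dot{\tilde\Theta}=-\gamma\psi^2\tilde\Theta$ gives $\tilde\Theta(t)=\tilde\Theta(0)\exp(-\gamma\int_0^t\psi^2)$, so all that is needed is $\psi\notin\mathcal L_2$ --- much weaker than PE --- and this is guaranteed by the structure: $\dot\psi=-\alpha\psi+\alpha e^{\xi_2}$ with $e^{\xi_2}>0$ and $\psi(0)>0$ keeps $\psi(t)>0$, and boundedness of trajectories bounds $e^{\xi_2}$ away from zero, so $\int_0^\infty\psi^2=\infty$. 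This is precisely the role of the condition $\psi(0)>0$ in the statement. Your observation that $\hat\Theta$ must remain positive for $\ln\hat\Theta$ to be defined is legitimate and is in fact glossed over by the paper.

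The genuine gap is in \textbf{F3}. Your non-existence argument asserts that ``any left-invertible coordinate isolating $x_3$ in a KKLO PDE forces its eigenvalue to be $0$,'' but injectivity of $\phi$ does not require any component to \emph{isolate} $x_3$; it only requires some component $\phi_2$ to \emph{depend} on $x_3$, possibly through an arbitrary nonlinear mixture of $(x_1,x_2,x_3)$. Moreover, $x_3$ feeds into $\dot x_1$ through the term $e^{x_3}$, so a mixed coordinate could in principle acquire the decay rate $\lambda_2<0$ via that coupling --- your heuristic does not engage with this possibility at all. The paper's proof has to work for it: setting $\rho=\nabla\phi_2$, it matches the $u$-dependent terms to conclude $\rho_3=\rho_3(x_1)$, invokes Poincar\'e's lemma (symmetry of $\nabla\rho$) to pin down $\rho_1=\rho_3'(x_1)x_3+L(x_1,x_2)$ and $\rho_2=\rho_2(x_1,x_2)$, and only then matches the $x_3$-dependent terms (including the $e^{x_3}$ contributions) to force $\rho_3\equiv 0$, a contradiction. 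The PEBO half is analogous: one must show that \emph{no} injective $\phi$ with $\Lambda=0$ exists, which requires ruling out $x_2$-dependence of a general component via the same Jacobian-symmetry machinery, not merely observing that the naked coordinate $x_2$ has damping. As written, your \textbf{F3} is a plausibility argument, not a proof.
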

\begin{proof}
\noindent [Proof of {\bf  F3}] KKLO requires $\phi(x)$ to be injective. To guarantee this property  at least one of its three components should depend on $x_3$. Suppose, without loss of generality, that $\phi_2(x)$ depends on $x_3$. Define the three-dimensional vector $\rho$ as
\begequ
\lab{rho}
\rho(x):=\nabla\phi_2(x).
\endequ
From the PDE \eqref{KKLOPDE} we have
\begequ
\lab{lhs}
\rho^\top(x) \begin{pmatrix} -x_1^3 + e^{x_3} \\ -x_2 + x_1^2 + \sin x_1 \\ x_1 u + ( 1+x_1^2)^{-1}   \end{pmatrix} =-\lambda_2 \phi_2(x) + B_2(x_1,u)
\endequ
Since $\phi_2(x)$ dependends of $x_3$, we have $\rho_3(x) \neq 0$. The left hand side term of \eqref{lhs} dependent on $u$ is $\rho_3(x) x_1u$, while the one in the right hand side is $B_2(x_1,u)$, from which we conclude that $\rho_3(x)$ {\em only depends} on $x_1$, that is $\rho_3(x) = \rho_3(x_1)$. From Poincare's lemma we have that \eqref{rho} holds if and only if the Jacobian $\nabla \rho(x)$ is a {\em symmetric} matrix.  Applying  this condition to the $(1,3)$ element of the Jacobian we conclude that $\rho_1(x)$ should satisfy
$$
\rho_1(x) :=\rho'_3(x_1) x_3 + L(x_1,x_2),
$$
with $L(x_1,x_2)$ to be determined and $(\cdot)'$ the derivative with respect to its argument. From the $(2,3)$ element we also conclude that $\rho_2(x)$ is independent of $x_3$, that is $\rho_2(x) : = \rho_2(x_1,x_2)$.

The terms in the left-hand side of \eqref{lhs}  dependent on $x_3$ are $-\rho'_3(x_1) x_3 x_1^3  + \rho'_3(x_1) x_3 e^{x_3} + L(x_1,x_2) e^{x_3}$, while the one on the right-hand side is $-\lambda_2\rho_3(x_1) x_3$. Thus we conclude that $L(x_1,x_2) =0$ and
$$
-\lambda_2\rho_3(x_1) x_3 = -\rho'_3(x_1) x_3 x_1^3  + \rho'_3(x_1) x_3 e^{x_3}.
$$
Hence
$$
-\lambda_2\rho_3(x_1) = \rho'_3(x_1)(-  x_1^3  +   e^{x_3}),
$$
whose only solution is $\rho_3 =0$, which contradicts with the fact that $\rho_3 \neq 0$, due to $\lambda_2 \neq 0$ in the KKLO. Therefore, it shows that the system \eqref{num_exmp} does not admit a KKLO.

For the injectivity of $\phi(x)$ in PEBO, assume that $\phi_2(x)$ depends on $x_2$ and $\rho_2(x)\neq 0$. It follows from the argument above that, for the PEBO with $\lambda_2 =0$ we have $\rho_3'(x_1)=0$, yielding $\rho_3(x_1) = c$ and $\rho_1(x)=0$ with a constant $c$. From the (1,2) and (2,1) elements of the Jacobian matrix $\nabla \rho(x)$ we conclude that $\nabla_{x_1}\rho_2=0$ and $\rho_2(x_1,x_2):=\rho_2(x_2)$. Then in terms of \eqref{lhs}, we have
$$
-\rho_2(x_2)x_2 + \rho_2(x_2) x_1^2 + \rho_2(x_2) \sin x_1 = B_2(x_1,u).
$$
Since the first term in the left hand side does not depend on $x_1$, we conclude that $\rho(x_2) =0$ leading to a contradiction. Thus the given system does not admit a PEBO.
\\

\noindent [Proof of {\bf  F4}]  The [KKL+PEB]O PDE  \eqref{KKLOPDE} with \eqref{lam0} has a solution as $q=1$, $\lambda_1=-1$, $\phi(x) = \col(x_2,x_3)$ and
$$
B(h(x),u)=
\begin{bmatrix}
 x_1^2 + \sin x_1 \\
  x_1 u + {1 \over x_1^2 +1}
\end{bmatrix}.
$$
Thus the observer dynamics is given by \eqref{dyn_ext_5.1-1} and \eqref{dyn_ext_5.1-2}. From which we conclude that
$$
\lim_{t\to \infty} |\xi_1(t) - x_2(t)|=0, \; x_3(t) =  \xi_2(t) + \theta,
$$
with the constant $\theta$ to be estimated. Noticing the following relationship $\dot{x}_1 = -x_1^3 + e^{x_3}$, we can formulate a  linear regression model for the estimation of $\theta$ of the form
$$
Y = \psi \Theta + \epsilon_t
 $$
where $\Theta:= \exp(\theta)$, $Y$ and $\psi$ are defined in \eqref{ypsi} and $\epsilon_t$ is a (generic) exponentially decaying term that, without loss of generality, we neglect in the sequel.\footnote{See Remark 3 in \cite{ARAetaltac} where the effect of these term is rigorously analyzed.}  Finally, the choice of initial condition for $\psi$ ensures that $\psi(t)$ is not square integrable, thus
$$
\dot{\tilde{\Theta}} = - \gamma \psi^2 \tilde{\Theta}
$$
with $\tilde{\Theta}:= \hat{\Theta} - \Theta$ ensures $\lim_{t\to\infty} \hat \Theta(t)=\Theta$ and consequently \eqref{obscon} is guaranteed.
\end{proof}
%
%To verify the above analysis, we did numerical simulation in Matlab/Simulink. The simulation scenario was as follows: the control input $u$ is selected as
%$$
%u = -y + \sin(2\pi t).
%$$
%The initial states in the plant and observer are selected as $x_1(0) = -1,  x_2(0)=1,  x_3(0) = 1, \xi_1(0)=0, \xi_2(0)=0,   \hat{\Theta}(0)=2$ and the initial conditions of the filters set to zero. The gains are $\gamma=1$ and $\alpha=1$. The simulation results are shown in Fig. \ref{fig1}.
%
%\begin{figure}[h]
%  \centering
%  \includegraphics[width=4.2cm]{fig/}
%  \includegraphics[width=4.2cm]{fig/}
%  \caption{Simulation results of the example of Proposition \ref{pro6}.}
%  \lab{fig1}
%\end{figure}

%
\subsection{A class of nonlinear systems for [KKL+PEB]O}
\label{subsec52}
%%%%%%%%%%%%%%
%
We identify now a class of systems whose states can be reconstructed with the proposed [KKL+PEB]O.

\begin{proposition}
\lab{pro7}
\rm
Consider systems of the form
\begin{align}
\label{exa2}
\dot{\mathbf{x}}_1 & = \mathbf{f}_1( \mathbf{x}_1,\mathbf{x}_{2},\mathbf{x}_{3},u) + S({x},u)   \nonumber\\
\dot{\mathbf{x}}_2 & = {\mathbf A}_2\mathbf{x}_2 + \mathbf{f}_2 (\mathbf{x}_1, u)   \nonumber\\
\dot{\mathbf{x}}_{3} & = {\mathbf A}_3 \mathbf{x}_{3} + \mathbf{f}_{3} (\mathbf{x}_1, \mathbf{x}_{2},u) \\
\dot{\mathbf{x}}_{4} & = \mathbf{f}_4 (\mathbf{x}_1,\mathbf{x}_{2}, \mathbf{x}_{3},u)  \nonumber\\
 {y} & = \mathbf{x}_1,  \nonumber
\end{align}
where  ${x}:=\col(\mathbf{x}_{1},\mathbf{x}_{2}, \mathbf{x}_3,\mathbf{x}_{4})$, with $\mathbf{x}_i \in \rea^{n_i},\;i=1,\dots,4$, and $u \in \rea^{m}$, verifying the following assumptions.
\begite
\item[(i)] There exists $1 \leq k \leq n_1$ such that the corresponding element of the vector $S$ satisfies
\begequ
\lab{sk}
S_k({x},u)=b^\top (\mathbf{x}_1,\mathbf{x}_{2},\mathbf{x}_{3},u) \mathbf{x}_{4}
\endequ
for some mapping $b:\rea^{n_1} \times \rea^{n_2} \times \rea^{n_3} \times \rea^m \to \rea^{n_4}$.
\item[(ii)] The matrices $ {\mathbf A}_2$ and $ {\mathbf A}_3$ are Hurwitz.
\item[(iii)]  The control input guarantees that the trajectories of \eqref{exa2} are bounded and the following persistency of excitation condition is verified
\begequ
\lab{pe}
\int_t^{t+T} b(s)
 b^\top(s)ds \geq \delta I_{n_4},
\endequ
for all $t \geq 0$ and some $\delta,T>0$.
\endite
The system admits a [KKL+PEB]O  of the form
\begin{align}
\lab{x2}
\dot{\hat {\mathbf{x}}}_2 & =  {\mathbf A}_2 \hat {\mathbf{x}}_2 + \mathbf{f}_2 (y, u) \\
\lab{x3}
\dot{\hat {\mathbf{x}}}_{3} & =  {\mathbf A}_{3} \hat {\mathbf{x}}_{3} + \mathbf{f}_{3} (y,\hat {\mathbf{x}}_{2},u) \\
\lab{dotxi0}
\dot{\mathbf{\xi}} & = \mathbf{f}_4 (y,\hat{\mathbf{x}}_2,\hat{\mathbf{x}}_{3}, u)   \\
\lab{x4}
\hat{\mathbf{x}}_4 & = \mathbf{\xi} + \hat \theta,
\end{align}
with parameter estimator
\begequ
\lab{paa}
\dot{ \hat{\theta}}  =   \Gamma \hat \psi (\hat {Y} - \hat {\psi}^\top\hat{\theta})
\endequ
where
$$
\begin{aligned}
\hat {Y} & :=  {\alpha p \over p + \alpha} \big[{y}_k\big] -  {\alpha  \over p + \alpha}\big[{f}_{1,k}(y,\hat{ \mathbf{x}}_{2},\hat{ \mathbf{x}}_{3},u) \big]  \\
& \quad \; -  {\alpha  \over p + \alpha}\big[b^\top (y,\hat{ \mathbf{x}}_{2},\hat{ \mathbf{x}}_{3},u)\xi \big], \\
\hat \psi &  :=  {\alpha  \over p + \alpha}\big[b (y,\hat{ \mathbf{x}}_{2},\hat{ \mathbf{x}}_{3},u) \big]
\end{aligned}
$$
with ${f}_{1,k}$ the $k$-th element of the vector $\mathbf{f}_1$.
\end{proposition}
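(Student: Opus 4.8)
The plan is to exploit the triangular (cascade) structure of \eqref{exa2}: the output $\mathbf{x}_1=y$ is known exactly, the subsystems $\mathbf{x}_2$ and $\mathbf{x}_3$ are linear in the state with Hurwitz matrices and thus amenable to a KKLO (copy-of-dynamics) observer, while the block $\mathbf{x}_4$ carries a pure-integrator structure and is therefore handled by PEBO. First I would analyze the error $\tilde{\mathbf{x}}_2:=\hat{\mathbf{x}}_2-\mathbf{x}_2$. Since $\mathbf{f}_2$ depends only on $\mathbf{x}_1=y$, subtracting \eqref{x2} from the $\mathbf{x}_2$-dynamics gives $\dot{\tilde{\mathbf{x}}}_2=\mathbf{A}_2\tilde{\mathbf{x}}_2$, and Assumption~(ii) yields exponential convergence $\tilde{\mathbf{x}}_2\to 0$. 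Next, for $\tilde{\mathbf{x}}_3:=\hat{\mathbf{x}}_3-\mathbf{x}_3$ one obtains $\dot{\tilde{\mathbf{x}}}_3=\mathbf{A}_3\tilde{\mathbf{x}}_3+[\mathbf{f}_3(y,\hat{\mathbf{x}}_2,u)-\mathbf{f}_3(y,\mathbf{x}_2,u)]$; the bracketed term vanishes exponentially because of the previous step and the smoothness/boundedness hypotheses, so the Hurwitz system driven by this vanishing input is input-to-state stable and $\tilde{\mathbf{x}}_3\to 0$ exponentially.

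With both KKLO errors decaying, I would turn to the PEBO part. Comparing \eqref{dotxi0} with the $\mathbf{x}_4$-dynamics gives $\tfrac{d}{dt}(\mathbf{x}_4-\xi)=\mathbf{f}_4(\mathbf{x}_1,\mathbf{x}_2,\mathbf{x}_3,u)-\mathbf{f}_4(y,\hat{\mathbf{x}}_2,\hat{\mathbf{x}}_3,u)$, whose right-hand side decays exponentially (again by smoothness on the bounded trajectories together with the two preceding steps) and is hence integrable. Integrating, $\mathbf{x}_4(t)-\xi(t)\to\theta$ for a well-defined constant $\theta$, so that $\mathbf{x}_4=\xi+\theta+\epsilon_t$ with $\epsilon_t$ vanishing. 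It then remains to recover $\theta$. Using Assumption~(i), the $k$-th component of the $\mathbf{x}_1$-equation reads $\dot{y}_k=f_{1,k}(\mathbf{x}_1,\mathbf{x}_2,\mathbf{x}_3,u)+b^\top(\mathbf{x}_1,\mathbf{x}_2,\mathbf{x}_3,u)\mathbf{x}_4$; substituting $\mathbf{x}_4=\xi+\theta+\epsilon_t$, replacing the unmeasured $\mathbf{x}_2,\mathbf{x}_3$ by their estimates (absorbing all differences into a vanishing term), applying the strictly proper filter $\tfrac{\alpha}{p+\alpha}$, and using $\tfrac{\alpha p}{p+\alpha}[y_k]=\tfrac{\alpha}{p+\alpha}[\dot{y}_k]$, produces the linear regression $\hat{Y}=\hat{\psi}^\top\theta+\epsilon_t$, where $\hat{Y},\hat{\psi}$ are exactly the filtered quantities in \eqref{paa} and the constancy of $\theta$ allows it to be factored out of the linear filter.

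Finally I would invoke the standard convergence theory of the gradient estimator \eqref{paa}. Setting $\tilde{\theta}:=\hat{\theta}-\theta$ gives $\dot{\tilde{\theta}}=-\Gamma\hat{\psi}\hat{\psi}^\top\tilde{\theta}+\Gamma\hat{\psi}\epsilon_t$; if $\hat{\psi}$ is persistently exciting, the unperturbed error dynamics is uniformly exponentially stable, and since the perturbation $\Gamma\hat{\psi}\epsilon_t$ vanishes this forces $\tilde{\theta}\to 0$. Consequently $\hat{\mathbf{x}}_4=\xi+\hat{\theta}\to\xi+\theta=\mathbf{x}_4$, and together with $\hat{\mathbf{x}}_2\to\mathbf{x}_2$, $\hat{\mathbf{x}}_3\to\mathbf{x}_3$ and $\hat{\mathbf{x}}_1=y=\mathbf{x}_1$ this establishes \eqref{obscon}.

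I expect the main obstacle to be the penultimate step: verifying that the filtered regressor $\hat{\psi}=\tfrac{\alpha}{p+\alpha}[b(y,\hat{\mathbf{x}}_2,\hat{\mathbf{x}}_3,u)]$ inherits persistency of excitation from the hypothesis \eqref{pe} imposed on $b$. This needs two facts—that PE survives stable LTI filtering up to a modified excitation level, and that PE is robust to the exponentially vanishing discrepancy between $b$ evaluated at the true states and at the estimates—after which the coupling $\Gamma\hat{\psi}\epsilon_t$ must be shown not to destroy parameter convergence, e.g.\ via a converse-Lyapunov/ISS argument for the PE-driven linear time-varying error dynamics. The bookkeeping of the chain of decaying terms produced by the cascade, and of the filter initial-condition transients, is routine but must be tracked so that every $\epsilon_t$ is genuinely integrable and vanishing.
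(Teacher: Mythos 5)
Your proposal follows essentially the same route as the paper: exponential convergence of $\tilde{\mathbf{x}}_2$ and $\tilde{\mathbf{x}}_3$ from the cascade with Hurwitz $\mathbf{A}_2,\mathbf{A}_3$ (the paper uses an explicit Lyapunov/comparison-lemma argument where you invoke ISS, which is equivalent), then integrability of the $\mathbf{f}_4$ mismatch to get $\mathbf{x}_4=\xi+\theta+\epsilon_t$, the filtered regression $\hat Y=\hat\psi^\top\theta+\epsilon_t$, and convergence of the gradient estimator under PE. The obstacle you flag---that $\hat\psi$ must inherit persistency of excitation from \eqref{pe} and that the vanishing perturbation must not destroy convergence---is precisely what the paper dispatches with ``neglecting the terms $\epsilon_t$'' and ``standard adaptive control arguments,'' so you are, if anything, slightly more explicit about the same argument.
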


\begin{proof}
We first prove boundedness of $ \hat{ \mathbf{x} }_2$ and $ \hat{ \mathbf{x} }_3$. Due to the assumption (iii), we have that $\mathbf{f}_2 (y, u)  \in \linf$. Hence, from \eqref{x2} and the fact that $A_2$ is Hurwitz, we have that $\hat {\mathbf{x}}_2 \in \linf$. Proceeding in the same way with \eqref{x3} we conclude that  $ \hat{ \mathbf{x} }_3  \in \linf$.

Now, we prove that the observation errors  $\tilde{x}_i(t) := \hat{x}_i(t) - x_i(t)$ ($i=1,\ldots,3$) converge to zero exponentially fast. It is obvious that
$$
\dot{\tilde{ \mathbf x}}_2 = {\mathbf A}_2 \tilde{\mathbf x}_2,
$$
and $\lim_{t \to \infty} \tilde{ \mathbf x}_2= 0$ (exp.). Similarly,
$$
\dot{\tilde{ \mathbf x}}_3 = {\mathbf A}_3 \tilde{\mathbf x}_3 + {\mathbf f}_3 (y, \hat{\mathbf x}_2, u) - {\mathbf f}_3 (y, \mathbf{x}_2, u).
$$
Consider the function $V(\tilde{ \mathbf x}_3):=\hal \tilde{ \mathbf x}^\top_3P\tilde{ \mathbf x}_3$, with $P>0$ the solution of $P {\mathbf A}_3+ {\mathbf A}_3^\top P = - \ell_1 I <0$. Its time derivative satisfies
\begequarr
\nonumber
\dot V  & = & -\ell_1 |\tilde{ \mathbf x}_3|^2 + 2 \tilde{ \mathbf x}^\top_3 P [ {\mathbf f}_3 (y, \hat{\mathbf x}_2, u) - {\mathbf f}_3 (y, \mathbf{x}_2, u) ]\\
 & \leq & -\ell_1 |\tilde{ \mathbf x}_3|^2 + \ell_2 |\tilde{ \mathbf x}_3| |\tilde{\mathbf x}_2|,
\lab{dotv}
\endequarr
where we have used the fact that the boundedness of all the arguments of ${\mathbf f}_3(\cdot,\cdot,\cdot)$ ensures
$$
| {\mathbf f}_3 (y, \hat{\mathbf x}_2, u) - {\mathbf f}_3 (y, \mathbf{x}_2, u)| \leq  \ell_3 |\tilde{\mathbf x}_2|
$$
for some $\ell_3>0$. From \eqref{dotv}, the comparison lemma and the fact that  $ \tilde{ \mathbf{x} }_3  \in \linf$ and $\tilde{ \mathbf x}_2(t) \to 0$ (exp.), we conclude that $\tilde{ \mathbf x}_3(t) \to 0$ (exp.).

It only remains to prove that $\tilde {\mathbf x}_4:= \hat {\mathbf x}_4 -  {\mathbf x}_4$ also converges to zero. Consider \eqref{dotxi0} and define $\tilde {\mathbf \xi}: = \xi - \mathbf{x}_{4}$, which satisfies
$$
\dot{\tilde{\mathbf \xi}} =  \mathbf{f}_{4} (y, \hat {\mathbf{x}}_{2},  \hat {\mathbf{x}}_{3},u) -  \mathbf{f}_{4} (y,  {\mathbf{x}}_{2},  {\mathbf{x}}_{3},u).
$$
Hence,
\begequarrs
\tilde \xi(t) & = & \int_0^t [ \mathbf{f}_{4} (y(s), \hat {\mathbf{x}}_{2}(s),  \hat {\mathbf{x}}_{3}(s),u(s)) \\
                    && \quad\quad -  \mathbf{f}_{4} (y(s),  {\mathbf{x}}_{2}(s),  {\mathbf{x}}_{3}(s),u(s))]ds + \tilde \xi(0) \\
%& \leq & \int_0^t \left| [ \mathbf{f}_{4} (y(s), \hat {\mathbf{x}}_{2}(s),  \hat {\mathbf{x}}_{3}(s),u(s)) \right.\\
%                    && \quad\quad \left.  -  \mathbf{f}_{4} (y(s),  {\mathbf{x}}_{2}(s),  {\mathbf{x}}_{3}(s),u(s))] \right| ds + \tilde \xi(0) \\
                & \leq &  \ell_4 \int_0^t \left| \lef[{c}  \tilde {\mathbf{x}}_{2}(s) \\  \tilde {\mathbf{x}}_{3}(s) \rig] \right| ds + \tilde \xi(0),
\endequarrs
for some $\ell_4>0$, where we have used the same argument invoked above to get the second bound. Because of the exponential convergence to zero of its arguments, the integral above converges to a constant as $t \to \infty$, consequently, we can write
\begequ
\lab{x4xi}
\mathbf{x}_{4}(t)= \xi(t) + \theta + \et,
\endequ
for some constant vector $\theta$---equation \eqref{x4xi} corresponds to the key relationship \eqref{keyrel} of PEBO with $\phi(x)=\mathbf{x}_4$.

To complete the proof we show now that, under the persistent excitation condition \eqref{pe}, the proposed estimator is consistent, that is, $\lim_{t\to\infty}\hat \theta(t)=\theta$ that, together with \eqref{x4} and  \eqref{x4xi} establishes the claim that $\tilde {\mathbf x}_4(t) \to 0$. Towards this end, notice that replacing \eqref{sk} in the $k$-th equation of $\dot {\mathbf x}_1$ we get
\begequarrs
\dot {\mathbf x}_{1,k}  = {f}_{1,k}( \mathbf{x}_1,\mathbf{x}_{2},\mathbf{x}_{3},u)  + b^\top (\mathbf{x}_1,\mathbf{x}_{2},\mathbf{x}_{3},u) (\xi + \theta),
\endequarrs
where we have use \eqref{x4xi} to get the second equation. On the other hand, $\dot {y}_k= \dot {\mathbf x}_{1,k}$, hence applying the filter ${\alpha \over p + \alpha}$ we get the (ideal) regression form $Y=\psi^\top \theta$ with $Y  :=  {\alpha p \over p + \alpha} \big[{y}_k\big] -  {\alpha  \over p + \alpha}\big[{f}_{1,k}(y,{ \mathbf{x}}_{2},{ \mathbf{x}}_{3},u) \big]  -  {\alpha  \over p + \alpha}\big[b^\top (y,{ \mathbf{x}}_{2},{ \mathbf{x}}_{3},u)\xi \big],\;
\psi :=  {\alpha  \over p + \alpha}\big[b (y,{ \mathbf{x}}_{2},{ \mathbf{x}}_{3},u) \big],
$ that is, of course,  unmeasurable because of the dependence of ${f}_{1,k}$ and $b$ on the unknown states.  However, due to the fact that the estimation errors $\tilde{ \mathbf x}_2(t)$ and $\tilde{ \mathbf x}_3(t)$ converge exponentially fast to zero, we have that $\hat Y(t) = Y(t) + \et$ and $\hat \psi(t)=\psi + \et$.  Therefore, neglecting the terms $\et$, we get $ \hat{Y} =\hat {\psi}^\top {\theta}$. Replacing the equation above in  \eqref{paa} we get the parameter estimation error equation
$$
\dot{ \tilde{\theta}}  =   \Gamma \hat \psi \hat {\psi}^\top\tilde{\theta},
$$
where $ \tilde{\theta}:= \hat{\theta}- {\theta}$. The proof of (exponential) convergence of $\tilde \theta(t)$ to zero is completed invoking standard adaptive control arguments.
\end{proof}

For the sake of clarity we have presented Proposition \ref{pro7} in a very simple form, being possible to extend it in several directions.
\begite
\item Clearly, the number of subsystems of the form $\dot{\mathbf{x}}_{i}  =  {\mathbf A}_{i} {\mathbf{x}}_{i} + \mathbf{f}_{i} (y, {\mathbf{x}}_{1}, \cdots, {\mathbf{x}}_{n-1},u)$ can be larger than the two taken here.
\item Invoking the recent results of identification and adaptive control of nonlinearly parameterised systems---see \cite{LIUetal} and references therein---it is possible to replace Assumption (i) by:\\

(i') There exists $1 \leq k \leq n_1$ such that the corresponding element of the vector $S$ satisfies
$$
S_k({x},u)=b^\top (\mathbf{x}_1,\mathbf{x}_{2},\mathbf{x}_{3},u)\Phi( \mathbf{x}_{4}),
$$
for some {\em monotonic} mapping $\Phi:\rea^{n_4} \to \rea^{n_4}$.

\item Regarding  Assumption (i) it is also possible to consider the existence, not just of one element of $S$, but several of them verifying the factorizability condition. This will give rise to a matrix regressor $b$ for which the persistent excitation condition \eqref{pe} would be easier to satisfy.

\item For simplicity the unknown parameter $\theta$ is identified in Proposition \ref{pro7} with the classical gradient estimator \eqref{paa}. However, it is possible to replace this estimator with the high-performance dynamic regressor extension and mixing proposed in \cite{ARAetaltac}, see also  \cite{ortega2017sub}. As shown in these papers parameter convergence is ensured without the, often restrictive, persistent excitation condition \eqref{pe}.
\endite

\subsection{DC-DC \'Cuk converter}
\lab{subsec53}
%%%%%%%%%%%%%%%%%%%%%%%%%%%%%%%%%
%
In this section, we consider the widely studied DC-DC \'Cuk converter, depicted in Fig. \ref{fig2}, for which a PEBO and an I\&IO were reported in  \cite{ortega2015scl} and  \cite{astolfi2008book}, respectively. We also design a KKLO, a [KKL+PEB]O  and two high gain observers (HGOs) {\em \`a la} \cite{esfandiari1992ijc}. The purpose of this example is to compare, via simulations, the performance of all these observers from the point of view of gain tuning flexibility and robustness with respect to measurement noise, which is unavoidable in this application.

\begin{figure}
  \centering
  \includegraphics[width=9
  cm]{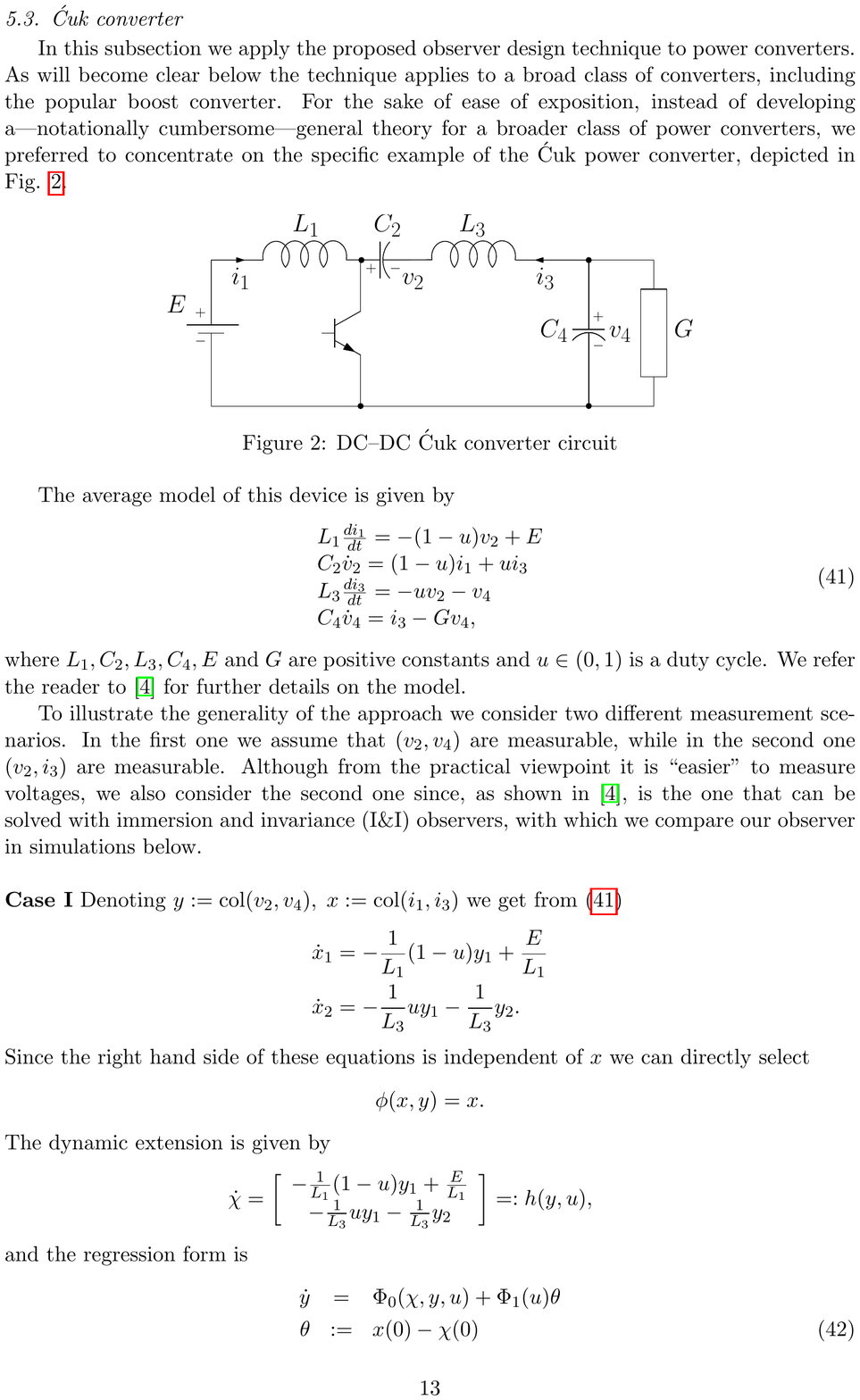}
  \caption{DC-DC \'Cuk converter circuit}\label{fig2}
\end{figure}

The averaged model of the system is given as
\begequ
\begin{aligned}
  \dot{x}_1 & =  - \frac{1}{L_1} (1-u) y_1 + \frac{E}{L_1} \\
  \dot{x}_2 & =   \frac{1}{C_4} y_2 - \frac{G}{C_4}x_2 \\
  \dot{y}_1 & =  \frac{1}{C_2} (1-u) x_1 + \frac{1}{C_2}u y_2 \\
  \dot{y}_2 & =  - \frac{1}{L_3} u y_1 - \frac{1}{L_3} x_2,
\end{aligned}
\endequ
where $x:= \col(i_1,v_4)$, $y: = \col(v_2,i_3)$, and $L_1,C_2,L_3,C_4,E,G$ are positive constants. $u\in(0,1)$ is a duty cycle. We are interested in estimating $x$ with $y$ measurable. Following the observer designs proposed in this note and the ones reported in the literature, we obtain the observers given in Table \ref{tab1}, in which $F(p)={\alpha \over p+\alpha}$ and $W(p)={\alpha p \over p+\alpha}$. Notice that for the KKLO, $\Lambda$ is a time-varying stable matrix, since $1 -u \notin \mathcal{L}_2$.

%%%%%%%%%%
%Table
%%%%%%%%%%

%%%%%%%%%%%%%%%%%%%%%%%%
\begin{table*}[]
\centering
\caption{\rm State observers for C\'uk converters}
\label{tab1}
\begin{tabular}{l|l|l}
%%%%%%%%%%%%%%%%%%%%%%%%%%%%%%%%%%%%%%%%%%%%%
\hline
  Type & Observer structure & Mappings\\
\hline
%%%%%%%%%%%%%%%%%%%%%%%%%%%%%%%%%%%%%%%%%%%%%
\begin{minipage}{2.5cm}
   KKLO
\end{minipage}&
\begin{minipage}{5cm}
$~~~\eqref{kklo}, \;\hat{x}=\col(L_1^{-1}\xi_2 +  L_1^{-1} C_2 y_1, \xi_1)$
\end{minipage}
&
\begin{minipage}{6cm}
$$
\begin{aligned}
\Lambda & = \diag (-C_4^{-1}{G} , - L_1^{-1}{(1-u)})\\
B              & = \col( C_4^{-1} y_2 ,   ( 1+ L_1^{-1}C_2) (-1 + u) y_1 + E - uy_2 )
\end{aligned}
$$
\end{minipage}
      \\
%%%%%%%%%%%%%%%%%%%%%%%%%%%%%%%%%%%%%%%%%%%%%
    \hline
%%%%%%%%%%%%%%%%%%%%%%%%%%%%%%%%%%%%%%%%%%%%%
\begin{minipage}{2cm}
    PEBO \cite{ortega2015scl}
\end{minipage}
&
\begin{minipage}{2cm}
 $$
\begin{aligned}
\text{ ~\eqref{dotxi},~~}
 \dot{\hat{\theta}}    & =  \Gamma \mathbf{M}^\top(\mathbf{Y} - \mathbf{M} \hat{\theta}) \\
 \hat{x}                      & = \hat{\theta} + \xi +  \col(0,C_4^{-1}{GL_3} y_2)
\end{aligned}
$$
\end{minipage}
&
\begin{minipage}{7cm}
$$
\begin{aligned}
B & =  \col(L_1^{-1}(E-(1-u)y_1), C_4^{-1}(y_2+ G uy_1))\\
\mathbf{M} &  = \diag(C_2^{-1}F [1-u], - L_3^{-1})
 \\
 \mathbf{Y} &  =
 \begin{bmatrix}
  W [ y_1]  - C_2^{-1} F[ \xi_1(1-u) + uy_2]  \\
  W [y_2]  +  F[  L_3^{-1}(uy_1 + \xi_2 )+ C_4^{-1}(GL_3 y_2)]
 \end{bmatrix}
 ,\;
 \alpha ,\Gamma \in \mathbb{R}_+\\
 \end{aligned}
$$
\end{minipage}
    \\
 %%%%%%%%%%%%%%%%%%%%%%%%%%%%%%%%%%%%%%%%%%%%%
\hline
  %%%%%
\begin{minipage}{2cm}
   [KKL+PEB]O
\end{minipage}&
\begin{minipage}{3cm}
$$
\begin{aligned}
     \eqref{pebo-eq},~~    \dot{\hat{\theta}} & =  \gamma M (Y - M \hat{\theta})\\
                \hat{x}                   & = \col(\xi_1 + \hat{\theta},  \xi_2)
 \end{aligned}
$$
\end{minipage}
&
\begin{minipage}{6cm}
$$
\begin{aligned}
\Lambda & = \diag(0,  -C_4^{-1}G),\;
             B  = \col(L_1^{-1}({E -(1-u)y_1}),C_4^{-1} y_2),\; P=I \\
            Y & = W[y_1] - C_2^{-1}F[ {(1-u)\xi_2 + uy_2 }] ,\;
            M  = C_2^{-1}F[1-u ],
\;
\alpha , \gamma\in\mathbb{R}_+
\end{aligned}
$$
\end{minipage}
\\
%%%%%%%%%%%%%%%%%%%%%%%%%%%%%%%%%%%%%%%%%%%%%
    \hline
%%%%
\begin{minipage}{2cm}
    I\&IO \cite{astolfi2008book}
\end{minipage}
&\multicolumn{2}{c}{
\begin{minipage}{14cm}
$$
\begin{aligned}
\dot{\xi}_1 & =  - \gamma_1 (1-u)(\xi_1 + C_2 \gamma_1 y_1)  + \gamma_1 uy_2    + L_1^{-1}{(E -(1-u)y_1 )}  \\
\dot{\xi}_2 & =C_4^{-1}({y_2 - G(\xi_2 - L_3\gamma_2 y_2)} ) - \gamma_2 ( uy_1 + \xi_2 - L_3\gamma_2y_2) \\
\end{aligned},\;
\hat{x}       = \xi +\begin{bmatrix} C_2 \gamma_1 y_1 \\  L_3 \gamma_2 y_2 \end{bmatrix}
,\;
\gamma_1,\gamma_2 \in \mathbb{R}_+
\quad\quad\quad\quad\;
$$
\end{minipage}}
%
%&
%\begin{minipage}{6cm}
%$ \gamma_1,\gamma_2 \in \mathbb{R}_+$
%\end{minipage}
    \\
%%%%%%%%%%%%%%%%%%%%%%%%%%%%%%%%%%%%%%%%%%%%%
\hline
\begin{minipage}{2cm}
    HGO \\
    (time-varying dynamics)
\end{minipage}
&\multicolumn{2}{c}{
\begin{minipage}{14cm}
$$
\dot \xi    =
                \begin{bmatrix}
                C_2^{-1}( 1-u ) \xi_2   + C_2^{-1}(u y_2) + \alpha_1 r_1^{-1} (y_1-\xi_1) \\
                - L_1^{-1}(1-u )  y_1 + L_1^{-1}E +  \alpha_2 r_1^{-2} (y_1-\xi_1) \\
                 - L_3^{-1} \xi_4  - L_3^{-1} u y_1 + \alpha_3 r_1^{-1}(y_2 -\xi_3)\\
                 C_4^{-1} y_2 - C_4^{-1} G \xi_4+ \alpha_4 r_1^{-2}(y_2 -\xi_3) \\
                \end{bmatrix} ,
\; \hat{x} = \begin{bmatrix} \xi_2 \\ \xi_4 \end{bmatrix},\; r_1 \in (0,1], \; \alpha_i > 0
\quad\quad\quad\quad\quad\quad\quad\quad\;\;
$$
\end{minipage}}

\\

%%%%%%%%%%%%%%%%%%%%%%%%%%%%%%%%%%%%%%%%%%%%%
\hline
\begin{minipage}{2cm}
    HGO \\
    (linear dynamics)
\end{minipage}
&\multicolumn{2}{c}{
\begin{minipage}{15cm}
$$
\dot \xi     =
                \begin{bmatrix}
                \xi_2 + \alpha_1 r_1^{-1} (y_1-\xi_1) \\
                -  L^{-1}(1-u) y_1 +  L_1^{-1}E  +  \alpha_2 r_1^{-2} (y_1-\xi_1) \quad\\
                  \xi_4 + \alpha_3 r_1^{-1}(y_2 -\xi_3)   \\
                C_4^{-1} y_2 - C_4^{-1}G\xi_4 + \alpha_4 r_1^{-2}(y_2 -\xi_3)\\
                \end{bmatrix}
,\;
\hat{x} =
\begin{bmatrix}
  {C_2 \xi_2- uy_2 \over 1-u}\\
                 - L_3\xi_4 - uy_1
\end{bmatrix},
\; r_2 \in (0,1], \; \alpha_i > 0
\quad\quad\quad\;\;
$$
\end{minipage}}
\\
 %%%%%%%%%%%%%%%%%%%%%%%%%%%%%%%%%%%%%%%%%%%%%
  \hline
\end{tabular}
\end{table*}

Simulations were conduct with measurement noises, which are generated by Matlab/Simulink's uniform random number block with sampling time of 0.0001s, and the magnitude limitations are [-0.02,0.02] for $y_1$ and [$-2\times 10^{-4},2\times 10^{-4}$] for $y_2$. The parameters of the converter are $L_1=10$ mH, $C_2$=22.0 $\mu$F, $C_4$=22.9 $\mu$F, G=0.0447 S and $E$=12 v. In order to give a fair comparison study, the system runs with the \emph{ideal} state-feedback with the stabilizing control law given in  \cite{astolfi2008book}
$$
u = \frac{|V_d|}{|V_d| + E} + \lambda \frac{G|V_d|v_2 + E(x_2 -x_1)}{1 + (G|V_d|v_2 + E(x_2-x_1))^2},
$$
where $V_d$ is the set point for the output voltage $v_4$, which was is selected as in \cite{ortega2015scl}. The observer parameters were taken as $\alpha=0.5,\;\gamma = 0.001, \;\Gamma=\diag(0.001,100), \;\gamma_1=50,\;\gamma_2=1,\; r_1=0.05, \;r_2=0.005, \; \alpha_1=\alpha_3=2$,$\;\alpha_2=\alpha_4=1$, to make the observers have approximate convergence speeds. All the initial values of the dynamic extensions in observers are selected as $0$. The simulation results are given in Fig. \ref{Doc3}.

The following remarks are in order.
\begin{itemize}
  \item KKLO and I\&IO have two-order dynamics, clearly, the lowest order ones. KLLO has the simplest observer structure. The parameters in PEBO were the easiest to tune with guaranteed convergence speed; KKLO and [KKL+PEB]O need to resolve PDEs to tune. Besides, for HGO the achievable convergence speed is severely limited.
\item The I\&I framework allows to treat in a unified manner the problems of \emph{state and parameter} estimation, see \cite{astolfi2008book} for the state observation with unknown parameters.
  \item The KKLO has the best performance in the presence of measurement noise, probably due to the fact that its dynamic extension is a linear system that attenuates the effect of the noise. On the other hand, the dynamics in PEBO, [KKL+PEB]O and I\&IO are nonlinear, and seem to have a deleterious impact on the noise.
  \item The first HGO yields a \emph{time-varying} error dynamics, which is stable because of the physical constraint $1-u>0$. It has oscillations in the transient stage. The second HGO has LTI dynamics, where high gain injections are used to estimate the output derivatives, \emph{i.e.}, $\dot{y}_1$ and $\dot{y}_2$. The operating modes of the converter switch at the moments $t=0.2k$ s ($k=1,\ldots,5$), yielding relatively \emph{large} derivatives of the outputs around these moments.
  \item As expected, the worst performance was systematically observed for the HGOs because of the high-gain injection needed to ensure its stability. It is worth pointing out that this (well-known) deleterious effect of high-gain injection was also observed for mechanical systems in \cite{ORTetalijc}.
\end{itemize}

%%%%%%%%%%
%Table
%%%%%%%%%%

\begin{figure*}[h]
  \centering
 \includegraphics[width=7.3cm,height=7cm]{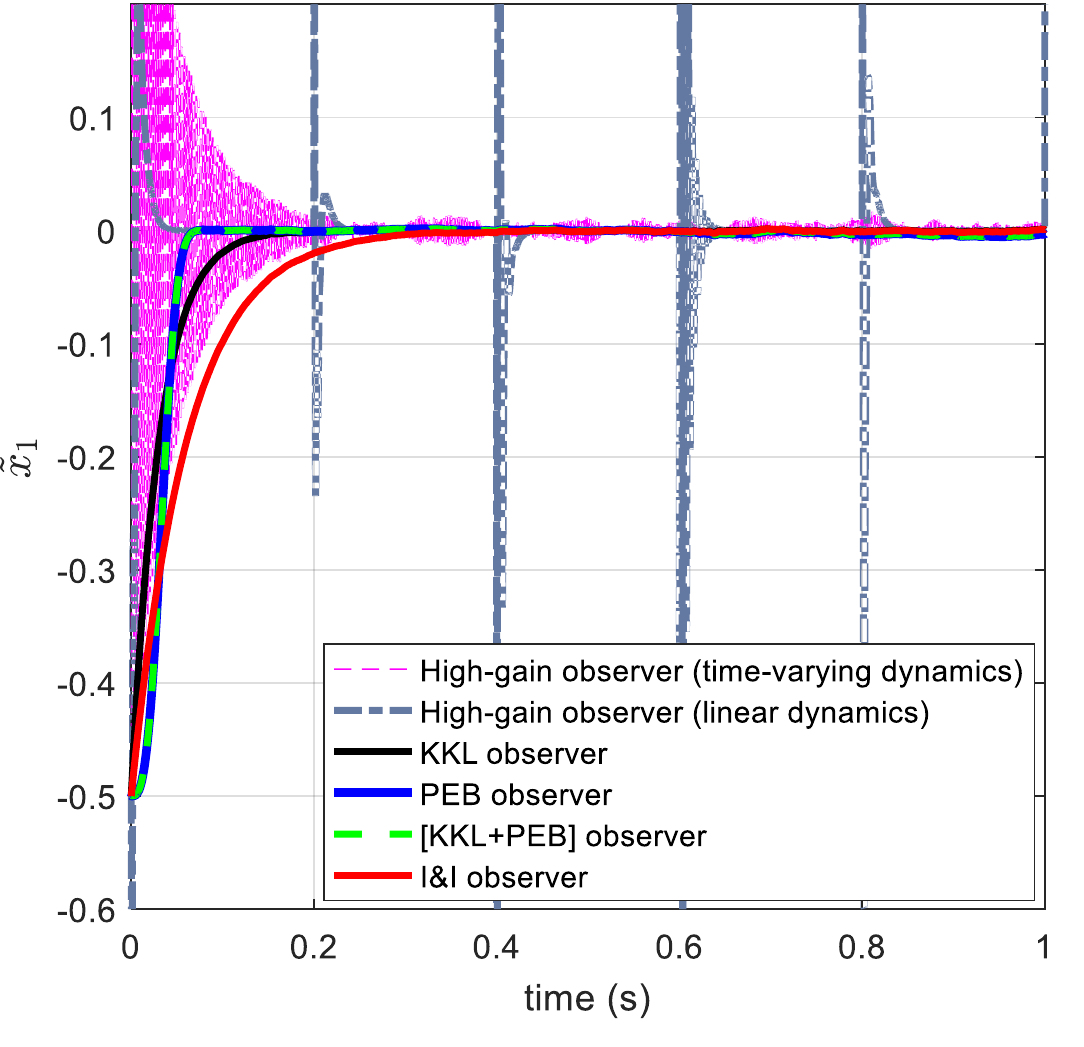}
  \includegraphics[width=7.3cm,height=7cm]{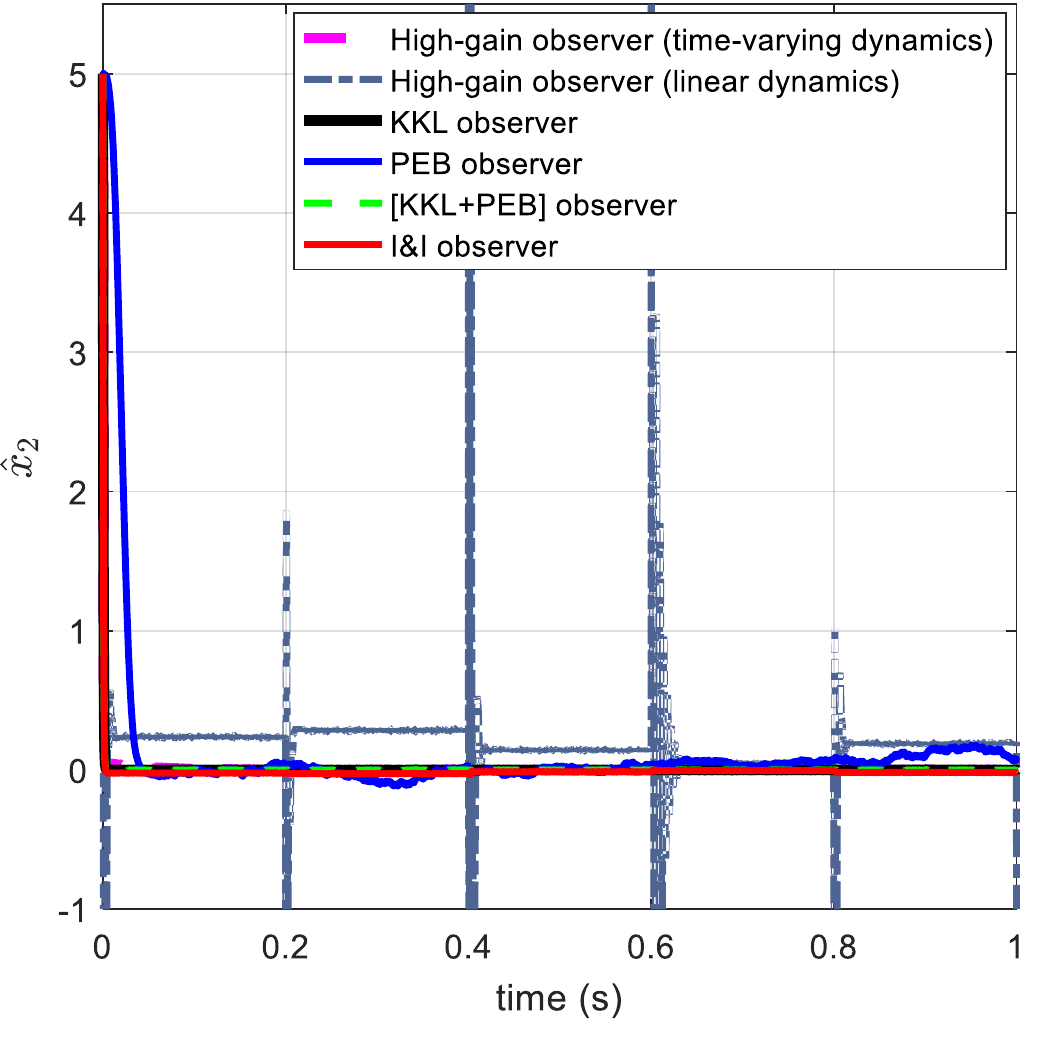}
 \includegraphics[width=7.3cm,height=7cm]{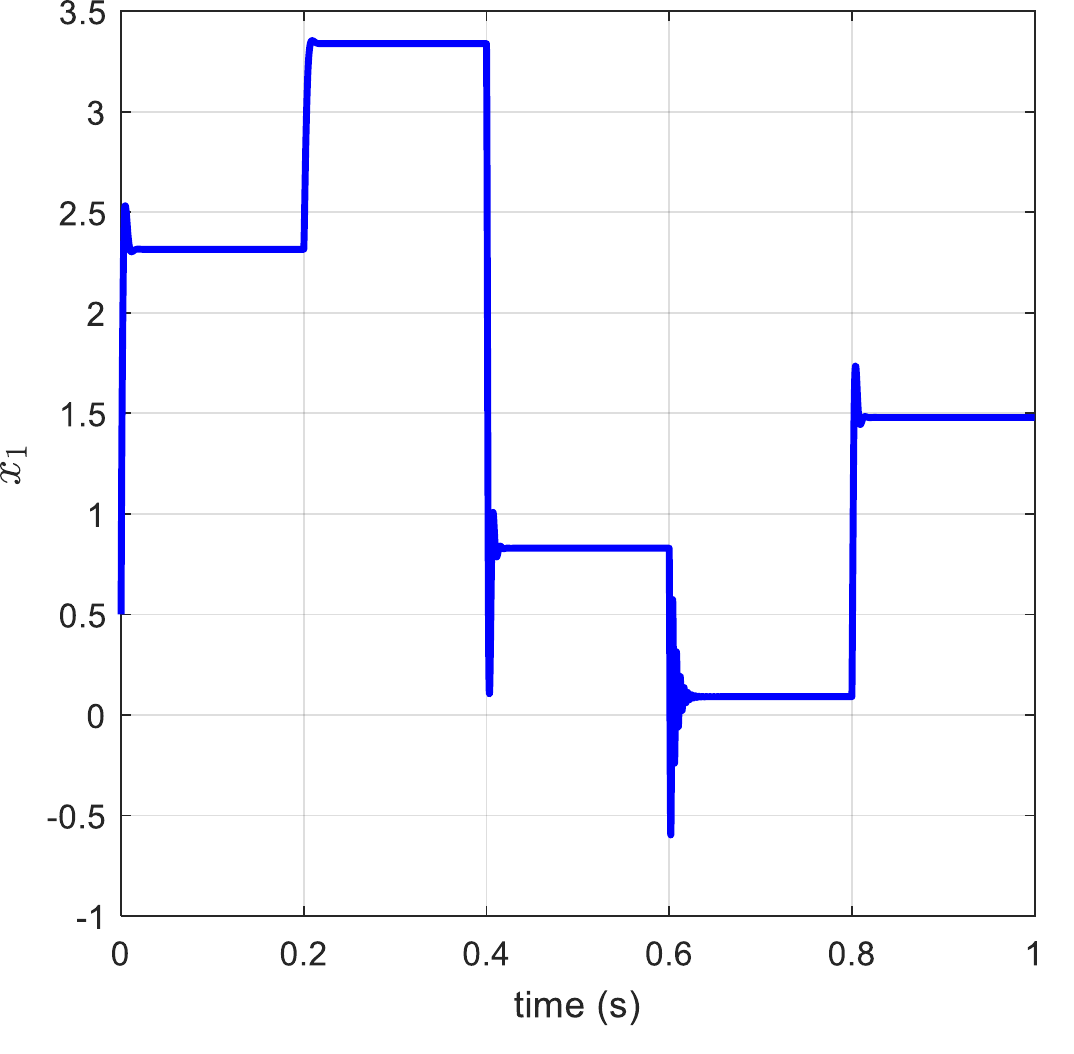}
  \includegraphics[width=7.3cm,height=7cm]{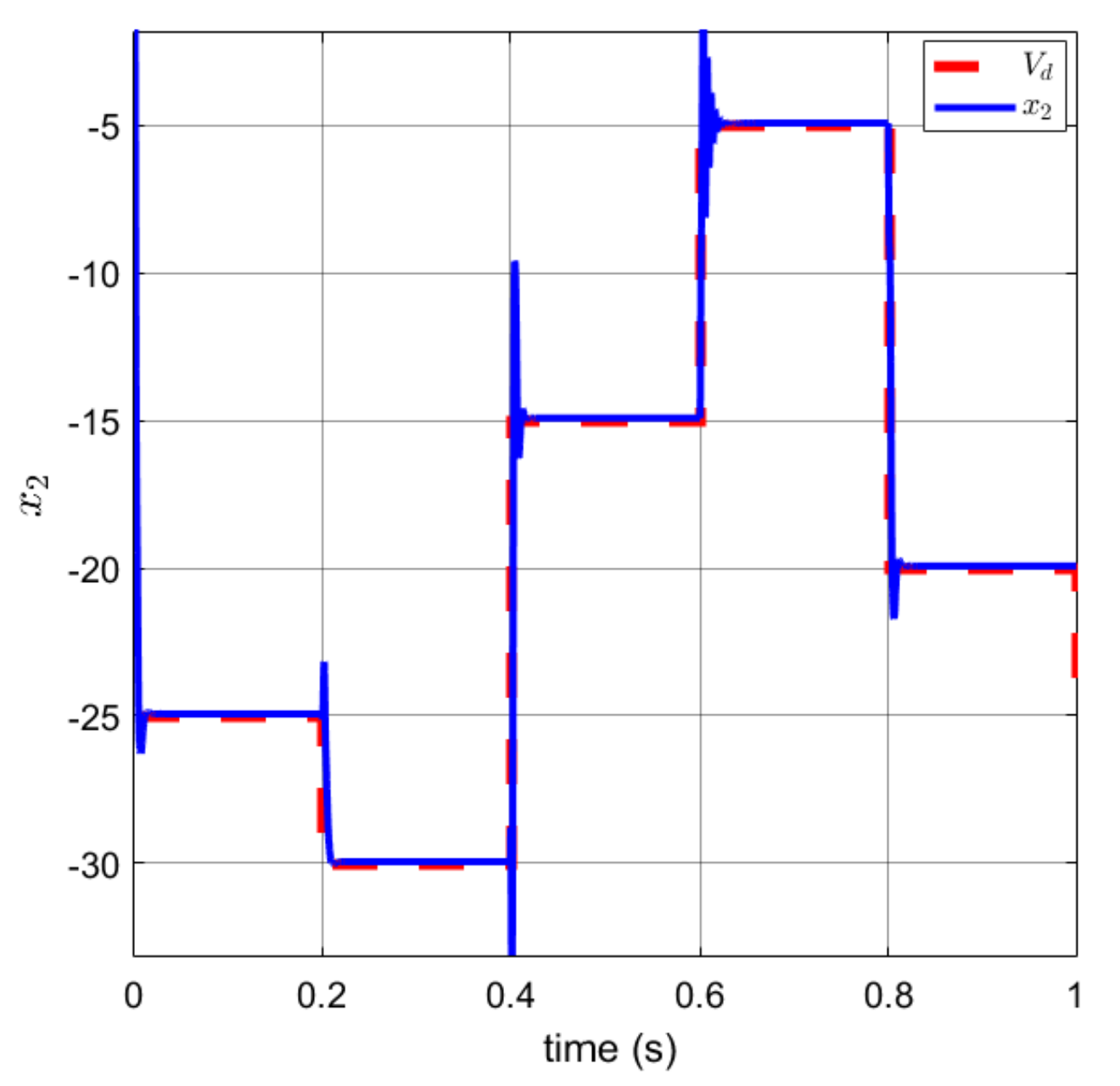}
  \caption{The system state variables $x_1$ and $x_2$ with the full-state feedback controller, and observation errors $\tilde{x}_1 = \hat{x}_1 - x_1$ and $\tilde{x}_2 = \hat{x}_2 - x_2$.}
  \label{Doc3}
\end{figure*}

%
%%%%%%%%%%%%%%%%%%%%%%%%%%%%%%%%
\section{Concluding Remarks}
\label{sec6}
%%%%%%%%%%%%%%%%%%%%%%%%%%%%%%%%
%
A new observer design technique, called [KKL+PEB]O, which consists of the combination of KKLO and PEBO was introduced---providing more degrees of freedom for the solution of the key PDE. Via the suitable selection of the tuning matrix $\Lambda$ of the form \eqref{lam0}, in the PDE \eqref{KKLOPDE}, [KKL+PEB]O reduces to PEBO or KKLO. An example that is not solvable with KKLO nor PEBO, but it is via [KKL+PEB]O show that the new observer design extends the applicability of PEBO and KKLO. Also we  identified a class of nonlinear systems, for which [KKL+PEB]O provides a simple constructive solution.

An additional contribution is the proof that, a slight generalisation of the I\&IO, allows us to obtain [KKL+PEB]O, as well as PEBO and KKLO, as particular cases of  I\&IO. This provides a unified framework, based on immersion and invariance, to treat the three observer designs and establish the ``set" relationship \eqref{setdia}.

Further research is underway in the following directions.
\begin{itemize}
  \item Exploit the constructive approach to find the free mappings in [KKL+PEB]O for some more specific classes of physical systems.
  \item Generalize the coordinate change from $\phi(x)$  to $\phi(x,u)$ in order to simplify the solution of the PDEs in these observers. Along this line of research, one interesting possibility is extending the theoretical observer existence results in \cite{Andrieu2006SIAM} to \emph{control} systems with input.
  \item Study [KKL+PEB]O-based output feedback control. In particular, we are currently investigating if, due to the presence of the PEBO part, the resulting controller enjoys a ``self-tuning" property similar to model reference adaptive control. That is, if the control objective can be achieved without requiring that the parameter estimation error $\tilde{\theta}$ converges to zero. Such a property would obviate the need of excitation conditions  for  [KKL+PEB]O-based (or PEBO-based) output feedback control.
\end{itemize}

\section*{acknowledgement}
The authors would like to thank Stanislav Aranovskiy, Hassan K. Khalil and Laurent Praly for some suggestions and comments in Section \ref{sec5}, also thank Pauline Bernard and Vincent Andrieu for clarifications on detectability in KKL observer design. They are also grateful to the Editor, Associate Editor and anonymous reviewers for their highly thorough revisions.

\end{document}